\tikzstyle{branch}=[fill, shape=circle, minimum size=3pt, inner sep=0pt]
\newtheorem{theorem}{Theorem}
\newtheorem{lemma}[theorem]{Lemma}
\newtheorem{construction}[theorem]{Construction}
\begin{document}
    
\title{
    Robust Chemical Circuits\footnote{
        This work is supported by National Science Foundation grants 1247051 and 1545028.
        A preliminary version of a portion of this work was presented at the Sixth International Conference on the Theory and Practice of Natural Computing (TPNC 2017, Prague, Czech Republic, December 18--20, 2017).
    }
}

\author[1]{Samuel J. Ellis}

\author[2]{Titus H. Klinge}

\author[3]{James I. Lathrop}
\affil[1]{The Molecular Sciences Software Institute, Blacksburg, VA 24060}
\affil[2]{Carleton College, Northfield, MN 55057}
\affil[3]{Iowa State University, Ames, IA 50011}

\date{}

\maketitle

\vspace{-1em}

\begin{abstract}
    We introduce a new motif for constructing robust digital logic circuits using input/output chemical reaction networks.
    These chemical circuits robustly handle perturbations in input signals, initial concentrations, rate constants, and measurements.
    In particular, we show that all combinatorial circuits and several sequential circuits enjoy this robustness.
    Our results compliment existing literature in the following three ways:
    (1) our logic gates read their inputs catalytically which make ``fanout'' gates unnecessary;
    (2) formal requirements and rigorous proofs of satisfaction are provided for each circuit; and
    (3) robustness of every circuit is closed under modular composition.
\end{abstract}

\section{Introduction}\label{sec:intro}
The development of affordable, fast, and reliable electronic logic circuits has broadly impacted society by accelerating many scientific and technological advancements.
Similarly, biochemical logic circuits have potential to broadly impact methods in drug therapy, bio-diagnostics, and synthetic biology.
Research into biochemical circuits dates back at least to~\cite{jHjWeRo91}, and since then many theoretical motifs for implementing logic circuits have been proposed~\cite{jMagn97,jHFLD09,jiang13,jGZWYZ17,oElli17,beiki18,arkin94,jQiaWin11a,garg18}.

Chemical reaction networks (CRNs) are currently the mathematical model of choice for biochemical computing and have been studied for over 50 years~\cite{jAris65}.
This is primarily due to recent results showing they are computationally powerful~\cite{oCSWB09,jSCWB08,cFLBP17} and can be implemented using DNA molecules~\cite{jSoSeWi10,jCard13,jCDSPCS13,srinivas17,cBSJDTW17} using toehold-mediated strand displacement~\cite{jYTMSN00,jZhaWin09,jZhaSee11,jLYCP12}.
Furthermore, high-quality DNA is relatively cheap to synthesize~\cite{hughes17}, all of which makes the chemical reaction network a promising development tool for biochemical applications.

The aim of this paper is to help further the \emph{reliability} of biochemical circuits.
In existing literature, the reliability of proposed circuit designs has been examined in one of two ways: simulation and experimentation.
References~\cite{jMagn97,jiang13,jGZWYZ17,beiki18} use simulation to analyze their circuits in various contexts, and references~\cite{arkin94,jQiaWin11a,garg18} include \emph{in vitro} experiments to verify their designs.
Although simulations and experiments demonstrate correctness under certain environmental assumptions and initial conditions, they cannot guarantee the \emph{absence} of failure.
Formally stating circuit requirements and rigorously proving their satisfaction in all circumstances satisfying certain conditions gives additional confidence in the design as well as insight into when failure is likely to occur.

We introduce a new biochemical circuit motif in the input/output chemical reaction network (I/O CRN) model originally introduced by Klinge, Lathrop, and Lutz~\cite{jack}.
An I/O CRN is an abstraction of the traditional CRN model~\cite{oFein79,oGuna03} making it possible for input signals to be provided externally over time.
These inputs can only be used catalytically which makes them \emph{read-only}.
Moreover, I/O CRNs offer a natural notion of robustness with respect to perturbations of the input signal, initial condition, rate constants, and measurement devices.
We use this notion to prove that our circuit designs operate correctly even in adversarial environments.

Our circuit design uses \emph{dual-rail} encoding of bits in which two species with opposite operational meaning are used to encode each value.
Each bit is designed so that the sum of these two species is constant, ensuring that if one has high concentration, the other is low.
Dual-rail representation is common in biochemical systems since both 0s and 1s are encoded by the \emph{presence} of molecules rather than their absence.
(Detecting the absence of a species is challenging since reactions are active only if their reactants are \emph{present}.
See~\cite{cDoty14,cEHKLLL14} for more details on the complexity of absence detection and for a proposed method for overcoming it.)
To ensure that only one of the dual-species is high at a time, we also include \emph{signal restoration} reactions for each encoded value.
These reactions are essential to proving that robustness is preserved under composition and causes the dual-species with majority concentration to consume the minority species.
For a thorough analysis of the behavior of these reactions, see~\cite{cKlin16}.

The key contributions of this work are: (1) we provide natural and rigorous requirements for what it means for I/O CRNs to simulate circuits; (2) we give an I/O CRN construction of a NAND gate and formally prove it satisfies its requirement even in the presence of perturbations to its input, initial state, rate constants, and measurements; (3) we prove that circuits can be modularly composed to robustly implement any combinatorial circuit; and (4) we prove that two commonly used sequential circuits for storing memory can be robustly implemented, namely the SR latch and the D latch.
Section~\ref{sec:prelim} reviews the I/O CRN model and the notion of robustly satisfying requirements;
Section~\ref{sec:robust_nand} provides an I/O CRN construction of a NAND gate with a formal proof that it is robust;
Section~\ref{sec:robust_circuits} contains our main theorem that all combinatorial circuits can be robustly implemented by I/O CRNs;
Section~\ref{sec:robust_memory} provides our I/O CRN constructions for the sequential memory components along with proofs that they are robust; and
Section~\ref{sec:discussion} closes with a discussion of the strengths and weaknesses of this method of implementing circuits.

\section{Preliminaries}\label{sec:prelim}
In this section, we review the definition of the input/output chemical reaction network (I/O CRN) and our notion of an I/O CRN robustly satisfying a requirement.
These were introduced by Klinge, Lathrop, and Lutz in 2016 and will soon appear in a detailed extension of~\cite{oKlLaLu15}.
For an in-depth overview, see~\cite{oKlin16}.

\subsection{Input/Output Chemical Reaction Networks}
\label{sub:iocrns}
We fix a countably infinite set $\bfm{S}=\{X_0,X_1,X_2\ldots\}$ of \emph{species}.
Intuitively, a species is an abstract type of molecule, and we denote them with capital Roman letters such as $X$, $Y$, and $Z$.
A \emph{reaction} over a finite set $\S\subseteq\bfm{S}$ of species is a triple $\rho=(\br,\bp,k)\in\svec{\S}\times\svec{\S}\times(0,\infty)$ such that $\br\ne\bp$.
The elements of a reaction $\rho=(\br,\bp,k)$ are called the \emph{reactant vector}, \emph{product vector} and \emph{rate constant}, respectively, and the \emph{net effect} of the reaction is the vector $\Delta\rho=\bp-\br$.
Given a reaction $\rho=(\br,\bp,k)$, we use $\brr=\br$, $\bpr=\bp$, and $\kr=k$ for the individual components of $\rho$.

We occasionally use the intuitive notation of chemistry to improve the readability of reactions.
For example, $A+B\goesto{k}2B+C$ defines the reaction $\rho=(\br,\bp,k)$ over the set $\S=\{A,B,C\}$  where $\br=(1,1,0)$ and $\bp=(0,2,1)$.
The net effect of the reaction is $\Delta\rho = (-1,1,1)$, meaning it consumes one $A$ and produces one $B$ and one $C$.
For convenience, we treat the vectors $\br$, $\bp$, and $\Delta\rho$ as functions from the set $\S$ into the natural numbers.
Thus, $\br(A)=1$, $\br(B)=1$, and $\br(C)=0$ for the reaction $\rho$ above.
We call a species $Y\in\S$ a \emph{reactant} of $\rho=(\br,\bp,k)$ if $\br(Y)>0$, a \emph{product} of $\rho$ if $\bp(Y)>0$, and a \emph{catalyst} of $\rho$ if $\br(Y)>0$ and $\Delta\rho(Y)=0$.
Note that a catalyst is simply a species that participates in a reaction but is unaffected by it.

An \emph{input/output chemical reaction network} (\emph{I/O CRN}) is a tuple $\N=(\U,\R,\S)$ where $\U,\S,\subseteq\bfm{S}$ are finite sets of species that satisfy $\U\cap \S=\emptyset$ and $\R$ is a finite set of reactions over $\U\cup\S$ such that $\Delta\rho(X)=0$ for each $\rho\in\R$ and $X\in\U$.
We call the elements of $\S$ \emph{state species} and the elements of $\U$ \emph{input species}.
Note that an I/O CRN $\N$ can only use its input species catalytically.
This ensures that input species are \emph{read-only} and cannot be modified by the operation of the I/O CRN.

Under \emph{deterministic mass action semantics} (also called \emph{mass action kinetics}), a \emph{state} of an I/O CRN $\N=(\U,\R,\S)$ is a vector $\bx\in\state{\S}$ that assigns to each $Y\in\S$ a real-valued \emph{concentration} $\bx(Y)$.
Similarly, an \emph{input state} is a vector $\bu\in\state{\U}$, and a \emph{global state} is a vector $(\bx,\bu)\in\state{\S\cup\U}$.

For a finite set $\W\subseteq\bfm{S}$, we define the $\W$-\emph{signal space} to be the set $C[\W]=C(\td,\state{\W})$ where $C(\mathcal{X},\mathcal{Y})$ is the set of all continuous functions from $\mathcal{X}$ to $\mathcal{Y}$.
A \emph{context} of an I/O CRN $\N=(\U,\R,\S)$ is a tuple $\bc=(\bu,\V,h)$ where $\bu\in C[\U]$, $\V\subseteq\S$, and $h:\state{\S\cup\U}\rightarrow\state{\V}$.
We call the components of the context $\bc=(\bu,\V,h)$ the \emph{input function}, the \emph{output species}, and the \emph{measurement function}, respectively.
The set of all contexts of an I/O CRN $\N$ is denoted $\mathcal{C}_\N$.
Intuitively, an I/O CRN can be regarded as a chemical machine that when placed in a context $\bc=(\bu,\V,h)$, transforms its input signal $\bu\in C[\U]$ into an observed output $\bv\in C[\V]$.
The inclusion of the measurement function $h$ in the definition of a context is to specify which species of the I/O CRN are being observed as well as encapsulate any errors introduced by the measurement equipment.
We also make use of the \emph{zero-error measurement function} $h_0$ defined by
\begin{equation}\label{eq:zero_error_measurement_function}
    h_0(\bx,\bu)(Y) = (\bx,\bu)(Y)
\end{equation}
for each global state $(\bx,\bu)\in\state{\S\cup\U}$ and for each state species $Y\in\S$.
Note that $h_0$ is a projection function and corresponds to a \emph{perfect} measurement device.

Given a global state $(\bx,\bu)\in\state{\S\cup\U}$ and a reaction $\rho\in\R$, the \emph{rate} of $\rho$ in $(\bx,\bu)$ is the real-value
\begin{equation}\label{eq:rate_definition}
    \rate{\bx,\bu}(\rho) = \kr\prod_{Y\in\S\cup\U}(\bx,\bu)(Y)^{\brr(Y)}.
\end{equation}
Thus, the rate of a reaction is proportional to each of its reactants.
For example, if $\rho=(\br,\bp,k)$ is the reaction defined by $A + 2 B \goesto{k} A + 3 C$ where $\U=\{A\}$ and $\S=\{B,C\}$, then its rate in state $(\bx,\bu)\in\state{\S\cup\U}$ is $k\bu(A)\bx(B)^2$.

For each species $Y\in\S$, the \emph{deterministic mass action function} for $Y$ is
\begin{equation}\label{eq:mass_action_function_definition}
    F_Y(\bx,\bu) = \sum_{\rho\in\R}\Delta\rho(Y)\cdot\rate{\bx,\bu}(\rho).
\end{equation}
Intuitively, the function $F_Y$ specifies the total rate of change imposed on $Y$ in the global state $(\bx,\bu)$.
In the context $\bc=(\bu,\V,h)$, the concentrations of all species in $\S$ of an I/O CRN evolve according to the system of ordinary differential equations (ODEs) defined by
\begin{equation}\label{eq:ode_vector_definition}
    \bx'(t) = F(\bx(t),\bu(t)),
\end{equation}
for all $t\in\td$ where $F(\bx,\bu)(Y)=F_Y(\bx,\bu)$ for each $Y\in\S$.
(Our occasional use of $\bx$ and $\bu$ as single states as well as concentration signals is intentional to reduce obfuscation.)

According to the standard theory of ODEs, if the input $\bu$ is real analytic, then the system~\eqref{eq:ode_vector_definition} along with an initial state $\xn\in\state{\S}$ has a unique solution $\bx(t)$ satisfying $\bx(0)=\xn$.
For this reason, we assume that all input signals are real analytic\footnote{
    All continuous signals produced by natural phenomena are real analytic, including all solutions to systems of polynomial differential equations.
    Therefore, placing this restriction on our input signals is not only necessary, it is a natural choice.
}.
See~\cite{oKraPar02} for a thorough introduction to real analytic functions.

Finally, we define the \emph{output signal} of an I/O CRN $\N=(\U,\R,\S)$ with initial state $\xn\in\state{\S}$ in context $\bc=(\bu,\V,h)$ to be
\begin{equation}\label{eq:output_function_definition}
    \N_{\xn,\bc}(t) = h(\bx(t)),
\end{equation}
for all $t\in\td$ where $\bx(t)$ is the unique solution to~\eqref{eq:ode_vector_definition} with initial state $\xn$.

We conclude by noting that I/O CRNs offer a natural means of modular design and composition.
Given two I/O CRNs $\N_1=(\U_1,\R_1,\S_1)$ and $\N_2=(\U_2,\R_2,\S_2)$, we define the \emph{join of} $\N_1$ and $\N_2$ to be the I/O CRN $\N_1\sqcup\N_2 = (\U,\R,\S)$ where $\U = \left(\U_1\cup\U_2\right)\setminus\left(\S_1\cup\S_2\right)$, $\R = \R_1\cup\R_2$, and $\S = \S_1\cup\S_2$.
If $\N_1$ and $\N_2$ have disjoint sets of state species, we say that $\N_1\sqcup\N_2$ is \emph{modular}.
Our combinatorial circuit architecture as well as our SR latch and D latch designs crucially depend on this natural modularity.

\subsection{Time-Dependent I/O CRNs}
\label{sub:time_dependent_crns}
In order to define robustness with respect to rate constants, we define a variation of the I/O CRN model that replaces the rate constants of reactions with non-negative functions of time.
For the purposes of this definition, we define a \emph{time-dependent reaction} over the set $\S$ to be tuple $\rho = (\br,\bp,\khat)$ where $\br,\bp\in\nats^{|\S\cup\U|}$ and $\khat:\td\rightarrow(0,\infty)$ is a real analytic function.
A \emph{time-dependent input/output chemical reaction network} (\emph{I/O tdCRN}) is a tuple $\N=(\U,\widehat{\R},\S)$ where $\U,\S\subseteq\bfm{S}$ are finite sets of species such that $\S\cap\U=\emptyset$ and $\widehat{\R}$ is a finite set of time-dependent reactions that only use species in $\U$ as catalysts.

The deterministic mass action semantics of an I/O tdCRN are the same as that of an I/O CRN except that the rate function of~\eqref{eq:rate_definition} changes to 
\begin{equation}\label{eq:rate_definition_td}
    \rate{\bx(t),\bu(t)}(\rho) = \khat(\rho)(t)\prod_{Y\in\S\cup\U}(\bx,\bu)(t)(Y)^{\brr(Y)},
\end{equation}
for all time $t\in\td$ in order to incorporate the time-dependent reactions.
Equations~\eqref{eq:mass_action_function_definition}-\eqref{eq:output_function_definition} also change using this new rate equation and become
\begin{align}
    F_Y(\bx(t),\bu(t)) &= \sum_{\rho\in\R}\Delta\rho(Y)\cdot\rate{\bx(t),\bu(t)}(\rho)\label{eq:mass_action_function_td}\\
    \bx'(t) &= F(\bx(t),\bu(t))\label{eq:ode_vector_definition_td}\\
    \N_{\xn,\bc}(t) &= h(\bx(t)),
\end{align}
respectively.

For an I/O CRN $\N=(\U,\R,\S)$ and constant $\delta>0$, we say that an I/O tdCRN $\Nhat=(\U,\widehat{\R},\S)$ is $\delta$-\emph{close} to $\N$ if each $\hat{\rho}\in\widehat{\R}$ is the time-dependent equivalent of $\rho\in\R$ and satisfies $|\kr-\khat(\hat{\rho})(t)|\le\delta$ for all $t\in\td$.

\subsection{Robustness}\label{sub:robustness}
A \emph{requirement} of an I/O CRN $\N=(\U,\R,\S)$ is an ordered-pair $\Phi=(\alpha,\phi)$ consisting of the two Boolean predicates $\alpha:\mathcal{C}_\N\rightarrow\{\true,\false\}$ and $\phi:C[\U]\times C[\V]\rightarrow\{\true,\false\}$, called the \emph{context assumption} and the \emph{I/O requirement}, respectively.
We say that an I/O CRN $\N=(\U,\R,\S)$ \emph{satisfies} the requirement $\Phi=(\alpha,\phi)$, and we write $\N\models\Phi$, if there exists an initial state $\xn\in\state{\S}$ such that for all $\bc\in\mathcal{C}_\N$
\begin{equation}
    \alpha(\bc) \implies \phi(\bu,\N_{\xn,\bc}).
\end{equation}

In order to capture the notion of approximately satisfying a requirement, we use the \emph{supremum norm} $\norm{f}=\sup_{t\in\td}|\bw(t)|$ for all $\bw\in C[\W]$ where $|\bw(t)|=\sqrt{\sum_{Y\in\W}\bw(t)(Y)^2}$ is the Euclidean distance function in $\reals^{|\W|}$.
For $\bw\in C[\W]$ and $\epsilon>0$, we define the \emph{closed ball of radius} $\epsilon$ \emph{around} $\bw$ to be the set $B_\epsilon(\bw) = \{\widehat{\bw}\mid\norm{\bw-\widehat{\bw}}\le\epsilon\}$.
If $\widehat{\bw}\in B_\epsilon(\bw)$, we say that $\widehat{\bw}$ is $\epsilon$-\emph{close} to $\bw$.

An I/O CRN $\N=(\U,\R,\S)$ $\epsilon$-\emph{satisfies} a requirement $\Phi=(\alpha,\phi)$, and we write $\N\models_\epsilon\Phi$, if there exists an initial state $\xn\in\state{\S}$ such that
\begin{equation}
    \alpha(\bu,\V,h)\implies\exists\bv\in B_\epsilon(\N_{\xn,\bc})\;[\phi(\bu,\bv)].
\end{equation}

Given a context $\bc=(\bu,\V,h)$ and real numbers $\delta_1,\delta_2>0$, we say that $\bchat=(\buhat,\V,\hhat)$ is $(\delta_1,\delta_2)$-\emph{close} to $\bc$ if $\norm{\bu-\buhat}\le\delta_1$ and $\norm{h-\hhat}\le\delta_2$.
Given states $\bx,\bxhat\in\state{\S}$ and $\delta>0$, we say that $\bxhat$ is $\delta$-\emph{close} to $\bx$ if $|\bx-\bxhat|\le\delta$.

Finally we state what it means for an I/O CRN to robustly satisfy a requirement.
Given $\N=(\U,\R,\S)$, $\Phi=(\alpha,\phi)$, $\epsilon>0$, and $\bfm{\delta}=(\delta_1,\delta_2,\delta_3,\delta_4)$ such that $\delta_1,\delta_2,\delta_3,\delta_4>0$,
we say that $\N$ $\bfm{\delta}$-\emph{robustly} $\epsilon$-\emph{satisfies} $\Phi$, and we write $\N\models_\epsilon^{\bfm{\delta}}\Phi$,
if there exists an initial state $\xn\in\state{\S}$ such that for all contexts $\bc=(\bu,\V,h)$ satisfying $\alpha(\bc)$,
for each context $\bchat=(\buhat,\V,\hhat)$ $(\delta_1,\delta_2)$-close to $\bc$,
for each state $\xnhat\in\state{\S}$ $\delta_3$-close to $\xn$,
and for each I/O tdCRN $\widehat{\N}$ $\delta_4$-close to $\N$, there exists a concentration signal $\bv\in C[\V]$ that is $\epsilon$-close to the output signal $\widehat{\N}_{\xnhat,\bchat}$ that satisfies $\phi(\bu,\bv)$.

We conclude this section with a note on modularly joining I/O CRNs.
If $\N_1$ and $\N_2$ are two I/O CRNs satisfying $\N_1\models_{\epsilon_1}^{\bfm{\delta}_1}\Phi_1$ and $\N_2\models_{\epsilon_2}^{\bfm{\delta}_2}\Phi_2$, respectively, and $\N=\N_1\sqcup\N_2$ is a modular join of $\N_1$ and $\N_2$, then the individual subcomponents of $\N$ still satisfy the requirements $\Phi_1$ and $\Phi_2$.
However, if $\N_1$ and $\N_2$ share state species, it is possible for them to interfere with each other, and they may no longer satisfy $\Phi_1$ and $\Phi_2$ after the join.
We utilize this modular composition extensively throughout the paper.

\section{A Robust NAND Gate}
\label{sec:robust_nand}
In this section, we prove that a two-input NAND gate can be robustly implemented by an I/O CRN.
First, we formally specify the requirement, then we give our I/O CRN implementation, and finally we prove the construction robustly satisfies the requirement.

Since the inputs and output of the NAND gate are implicit parameters to the requirement, we start by specifying them.
Given $X_1,X_2\in\bfm{S}$, we define the set of input species to be $\U = \{X_1,X_2, \Xbar_1, \Xbar_2\}\subseteq\bfm{S}$.
The species $X_1$ and $X_2$ represent the two inputs of the NAND gate, and $\Xbar_1$ and $\Xbar_2$ are their \emph{duals}.
A \emph{dual} of a species represents its Boolean complement; thus, if the concentration of $X_1$ is $b\in\{0,1\}$, the concentration of $\Xbar_1$ is $1-b$.
We also use this dual-rail convention for the output, and let $V=\{Y,\Ybar\}\subseteq\bfm{S}$ be the set of output species given $Y\in\bfm{S}$.

Given a positive real number $\tau$, called the \emph{propagation delay}, we define the NAND gate requirement $\Phi_\text{NAND}(\tau) = (\alpha,\phi)$ where $\alpha$ is defined by
\begin{equation}
    \alpha(\bu,\V,h) \equiv \left[\V=\{Y,\Ybar\}\text{ and }h=h_0\right],
\end{equation}
where $h_0$ from equation~\eqref{eq:zero_error_measurement_function} is the zero-error measurement function.
Requiring that $h=h_0$ simply requires it to faithfully measure the output species concentrations.
Errors will eventually be introduced into $h$ when we show that $\Phi_\text{NAND}(\tau)$ is \emph{robustly} satisfied.

Before we specify the I/O requirement of $\Phi_\text{NAND}(\tau)$, we first define some useful notation.
Let $\bfm{I}(\tau)$ be the set of all closed intervals at least length $\tau$, defined by
\begin{equation}
    \bfm{I}(\tau)=\{I=[t_1,t_2]\subseteq\td\mid t_2-t_1\ge \tau\}.
\end{equation}
Since the I/O requirement $\phi$ is a predicate that takes parameters $\bu\in C[\U]$ and $\bv\in C[\V]$, we use $\bu$ and $\bv$ as implicit parameters in the following definitions.
Given an interval $I\in\bfm{I}(\tau)$, a species $W\in\U\cup\V$, and a bit $a\in\{0,1\}$, we define
\[
    \dbracket{W=a}_I
        \equiv\;
        \begin{cases}
            (\forall t\in I)\big[\;\bu(t)(W) = a = 1-\bu(t)(\Wbar)\;\big], &\text{ if }W\in\U\\
            (\forall t\in I)\big[\;\bv(t)(W) = a = 1-\bv(t)(\Wbar)\;\big], &\text{ if }W\in\V
        \end{cases}.
\]
Note that $\dbracket{W=a}_I$ simply says that the species $W$ and its dual encode the values $a$ and $1-a$ for all $t\in I$.
To help with our definition of $\phi$, we also define the predicates
\[
    \phi_{11}(I)
        \equiv\dbracket{X_1=1\land X_2=1}_I,
    \qquad
    \phi_0(I)
        \equiv\dbracket{X_1=0\lor X_2=0}_I,
\]
for all $I\in\bfm{I}(\tau)$.
The predicate $\phi_{11}(I)$ says that $X_1$ and $X_2$ both encode the value $1$ in $I$ and $\phi_0(I)$ says that at least one of $X_1$ and $X_2$ must encode $0$ in $I$.
Similarly, for $a\in\{0,1\}$ we define the Boolean predicate
\[
    \psi_a(I)\equiv\dbracket{Y=a}_{[t_1+\tau,t_2]},
\]
for all $I=[t_1,t_2]\in\bfm{I}(\tau)$, which says that $Y$ encodes $a$ for all but the first $\tau$ time of the interval $I$.

We now have sufficient terminology to define the I/O requirement $\phi$ to be
\begin{equation}
    \phi(\bu,\bv)
        \equiv\big(\forall I\in\bfm{I}(\tau)\big)
            \big[
                \left(\phi_{11}(I)\rightarrow\psi_0(I)\right)
                \land
                \left(\phi_0(I)\rightarrow\psi_1(I)\right)
            \big]
\end{equation}
for all $\bu\in C[\U]$ and $\bv\in C[\V]$.
Intuitively, $\phi$ says that if $X_1$ and $X_2$ are both 1, then $Y$ must converge to 0 in at most $\tau$ time and must remain there as long as both inputs stay 1.
Similarly, if either input is 0, then the output must converge to 1 in at most $\tau$ time and remain there while the 0 persists.
This is visualized in Figure~\ref{fig:nand_requirement}.

\begin{figure}
    \centering
    \def\svgwidth{\textwidth}
    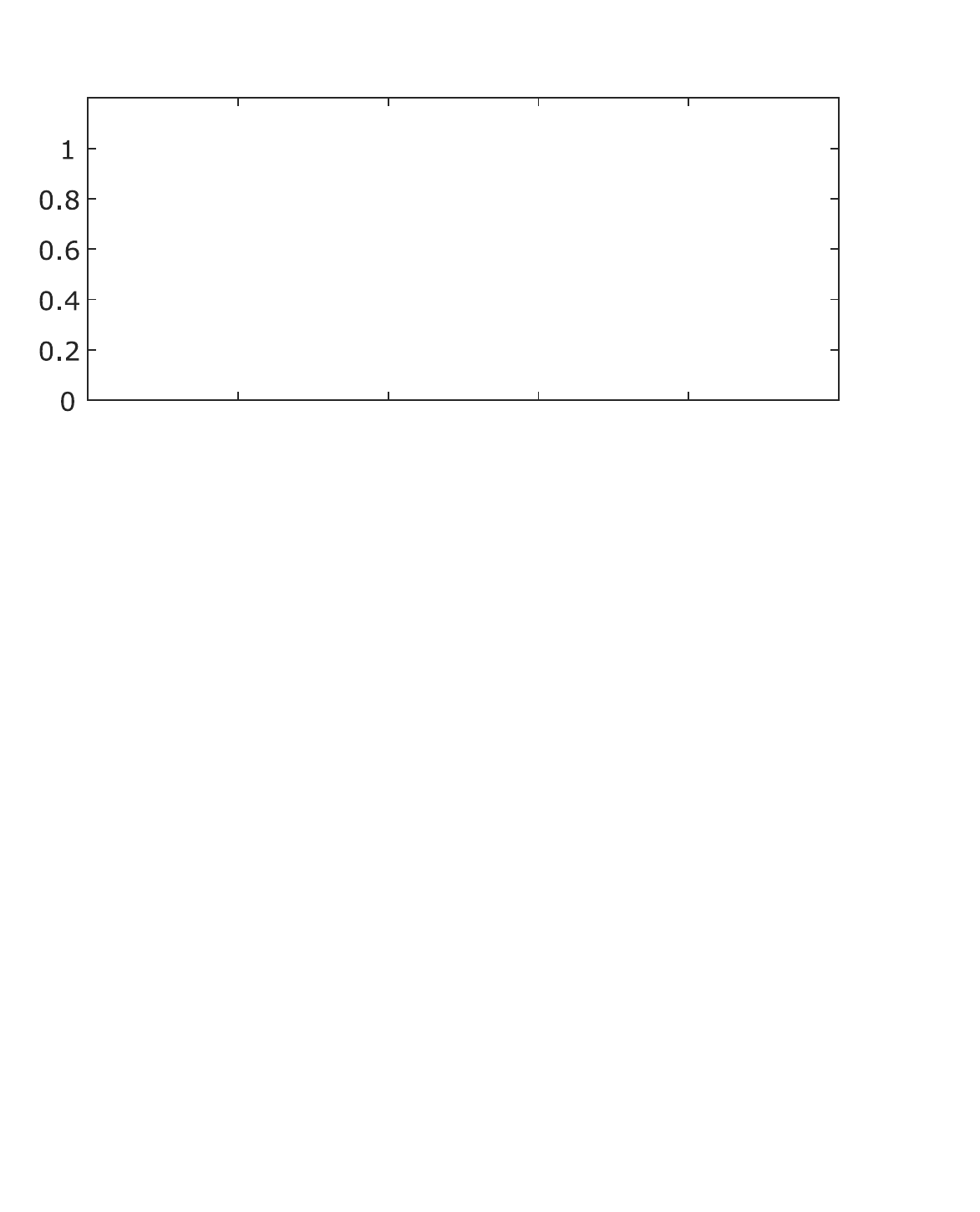
    \caption{Visualization of the NAND gate requirement
            \label{fig:nand_requirement}}
\end{figure}

We now specify our I/O CRN that robustly simulates a NAND gate.

\begin{construction}\label{const:robust_nand}
    Given three species $X_1,X_2,Y\in\bfm{S}$, a vector of strictly positive real numbers $\bfm{\delta}=(\delta_1,\delta_2,\delta_3,\delta_4)$, and $\tau>0$, define the I/O CRN $\text{NAND}_{\bfm{\delta},\tau}(X_1,X_2,Y) = (\U,\R,\S)$, where $\U = \{X_1,X_2,\Xbar_1,\Xbar_2\}$, $\S = \{Y, \Ybar\}$, $\R$ consists of the reactions
    \begin{align}
        X_1 + X_2 + Y &\goesto{k} X_1 + X_2 + \Ybar\label{reaction:set_one}\\
        \Xbar_1 + \Ybar &\goesto{k} \Xbar_1 + Y\label{reaction:set_zero_1}\\
        \Xbar_2 + \Ybar &\goesto{k} \Xbar_2 + Y\label{reaction:set_zero_2}\\
        2 Y + \Ybar &\goesto{3k} 3Y\label{reaction:approx_maj_1}\\
        2 \Ybar + Y &\goesto{3k} 3\Ybar\label{reaction:approx_maj_2},
    \end{align}
    and where $k = 100\delta_4 + \frac{13}{\tau}$.
\end{construction}

In the above construction, reaction~\eqref{reaction:set_one} biases the output toward $\Ybar$ when the inputs $X_1$ and $X_2$ are both present, reactions~\eqref{reaction:set_zero_1}-\eqref{reaction:set_zero_2} bias the output toward $Y$ in the presence of $\Xbar_1$ or $\Xbar_2$ (\ie in the absence of $X_1$ or $X_2$), and reactions~\eqref{reaction:approx_maj_1}-\eqref{reaction:approx_maj_2} give extra bias to the output species with majority concentration.
The latter two reactions are essential for the I/O CRN to produce an output signal that is as clean as its input and was studied extensively in \cite{cKlin16}.
The construction also preserves the total concentration of $Y$ and $\Ybar$ so that their sum is always constant.

We now state the main theorem of this section.

\begin{theorem}\label{theorem:robust_nand}
    If $\bfm{\delta}=(\delta_1,\delta_2,\delta_3,\delta_4)\in(0,\infty)^4$ and $\tau > 0$ are constants satisfying $\delta_2+\delta_3 < \delta_1 < \frac{1}{25}$ and $\delta_2+\delta_3 < \frac{1}{100}$, then $\text{NAND}_{\bfm{\delta},\tau}(X_1,X_2,Y)\models_{\delta_1}^{\bfm{\delta}} \Phi_\text{NAND}(\tau)$.
\end{theorem}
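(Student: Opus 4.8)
The plan is to reduce the two-dimensional dynamics of $\{Y,\Ybar\}$ to a single scalar ODE, analyze its phase line in the unperturbed ``clean'' regime, and then show that all four kinds of perturbation degrade the clean estimate by only a controlled amount. First I would observe the conservation law: every reaction \eqref{reaction:set_one}--\eqref{reaction:approx_maj_2} has net effect $0$ on $\bx(Y)+\bx(\Ybar)$, so this sum is invariant \emph{exactly}, even for a $\delta_4$-close I/O tdCRN $\Nhat$ with time-varying rate constants. I would take the single initial state $\xn=(\tfrac12,\tfrac12)$, so that $\xn(Y)+\xn(\Ybar)=1$; then any $\delta_3$-close $\xnhat$ has conserved sum $S:=\xnhat(Y)+\xnhat(\Ybar)$ with $|S-1|\le\sqrt2\,\delta_3$. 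Substituting $\bx(\Ybar)=S-\bx(Y)$ collapses \eqref{eq:ode_vector_definition_td} to a scalar equation $y'=G(t,y)$ for $y=\bx(Y)$, and evaluating $G$ at $y=0$ and $y=S$ (using that all inputs are nonnegative) shows $y'\ge0$ and $y'\le0$ there respectively, so $[0,S]$ is forward-invariant and the solution is global.

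Next I would record the perturbation bounds. Since $k=100\delta_4+\tfrac{13}{\tau}$ we have $\delta_4\le k/100$, so every time-dependent rate constant of $\Nhat$ lies in $[0.99k,\,1.01k]$; and since $\norm{\bu-\buhat}\le\delta_1$ against a clean dual-rail $\bu$, the perturbed input coordinates are pinned (for instance, when $\dbracket{X_1=1\land X_2=1}_I$ holds, $\buhat(t)(X_1)\,\buhat(t)(X_2)\ge(1-\delta_1)^2$ while $\buhat(t)(\Xbar_1)+\buhat(t)(\Xbar_2)\le\sqrt2\,\delta_1$ by the Euclidean norm constraint). In the fully clean regime ($S=1$, exact inputs, all rates $k$) the scalar field factors cleanly: in the ``both-$1$'' case reaction \eqref{reaction:set_one} against \eqref{reaction:approx_maj_1}--\eqref{reaction:approx_maj_2} gives $G=ky(-6y^2+9y-4)$, whose quadratic factor has discriminant $81-96=-15<0$ and maximum $-\tfrac58$ at $y=\tfrac34$, so $G<0$ on $(0,1]$ with the uniform contraction $G/y\le-\tfrac58 k$; the ``some-$0$'' case is the mirror image, contracting toward $S$. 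This is the crucial structural fact that the set reactions beat the approximate-majority reactions at \emph{every} point of the phase line, not merely near the endpoints.

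I would then carry the perturbations through this phase-line analysis. The $1\%$ rate bound and the input bounds degrade the contraction rate from $\tfrac58 k$ by only an $O(\delta_1+\delta_4)$ amount, and the wrong-direction reactions (\eg $\Xbar_i+\Ybar\to\Xbar_i+Y$ when the input is really $1$) contribute a bounded drift of size $O(\delta_1)$, giving a differential inequality $y'\le-\kappa(y-y^*)$ with $\kappa\ge\tfrac12 k$ and offset $y^*=O(\delta_1)$ in case A, and its reflection in case B. Integrating over an interval of length $\tau$ and using $k\tau\ge13$ drives the transient term $e^{-\kappa\tau}\le e^{-13/2}$ below any threshold, so on the settled part $[t_1+\tau,t_2]$ the state $(y,S-y)$ sits within the target tolerance of the clean encoding $(0,1)$ or $(1,0)$. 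Folding in the $\le\delta_2$ error of $\hhat$ against $h_0$ and the $\le\sqrt2\,\delta_3$ drift of $S$ from $1$, the hypotheses $\delta_2+\delta_3<\delta_1<\tfrac1{25}$ and $\delta_2+\delta_3<\tfrac1{100}$ are precisely what force the measured output $\Nhat_{\xnhat,\bchat}$ to within $\delta_1$ of the exact clean value throughout every settled interval.

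Finally I would assemble the certifying signal $\bv\in C[\V]$: pin $\bv$ to the exact clean encoding on each settled subinterval demanded by $\psi_0$ or $\psi_1$, and elsewhere set $\bv=\Nhat_{\xnhat,\bchat}$, interpolating continuously across the boundaries (possible because the preceding step makes the output already $\delta_1$-close to the pinned value at those boundaries). The propagation delay $\tau$ guarantees consistency: if $\phi_{11}$ and $\phi_0$ hold on two intervals they are necessarily disjoint, and the $\tau$-shift of the settled windows keeps them from overlapping, so the pinned demands never conflict and $\bv$ is a well-defined continuous signal satisfying $\phi(\bu,\bv)$ with $\norm{\bv-\Nhat_{\xnhat,\bchat}}\le\delta_1$. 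I expect the main obstacle to be the third step: simultaneously tracking how the four perturbations erode the clean contraction while still closing the inequality against the exact thresholds $\tfrac1{25}$ and $\tfrac1{100}$. The factorization $-6y^2+9y-4$ and the engineered constant $k=100\delta_4+\tfrac{13}{\tau}$ are the two levers that make this possible, and the careful bookkeeping of these competing constants is where the real difficulty lies.
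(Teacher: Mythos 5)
Your overall architecture is sound and its core analytic step is genuinely different from the paper's. The paper never uses your observation that the clean field factors as $ky(-6y^2+9y-4)$ with negative discriminant; instead it runs a two-phase argument: first a crude linear bound (Lemma~\ref{lemma:x_goesto_3/5}) in which the approximate-majority terms are replaced by their worst-case constant $-\frac{p^3}{18}\left((3+d)^{3/2}+9d\right)$, showing $\ybar$ reaches $\frac{3}{5}$ within time $\frac{\tau}{2}$, and then the signal-restoration analysis of~\cite{cKlin16} (Lemma~\ref{lemma:x_signal_restoration}) to carry $\ybar$ from $\frac{3}{5}$ up to $p-\gamma$ and hold it there. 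Your single-phase ``uniform contraction of the clean phase line'' viewpoint is attractive, and your explicit assembly of the certifying signal $\bv$ (pinning on settled windows, interpolating elsewhere) is more careful than anything the paper writes down.

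There is, however, one step that fails as written, and it is exactly the one you flagged as hard. You propose the inequality $\D{y}\le-\kappa(y-y^*)$ with $\kappa\ge\frac{1}{2}k$ and $y^*=O(\delta_1)$ and then conclude that the settled state is within the target tolerance. The constant hidden in that $O(\delta_1)$ is not a detail: the conclusion needs $y^*<\delta_1-\delta_2-\delta_3$, so the constant must be strictly below $1$. If you locate $y^*$ by balancing the wrong-direction drift $2(1+d)k\delta_1(p-y)\approx 2k\delta_1$ against your global contraction rate $\frac{1}{2}k$ --- which is all the factorization buys you, since $-6y^2+9y-4\le-\frac{5}{8}$ is tight at $y=\frac{3}{4}$ --- you get $y^*\approx 4\delta_1$, outside the tolerance by a factor of four. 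The equilibrium must instead be located using the \emph{local} contraction near $y=0$, roughly $\left(3p^2+(1-\delta_1)^2\right)k\approx 4k$ from the $-3y\ybar^2$ and $-x_1x_2y$ terms, which yields $y^*\approx\frac{\delta_1}{2}$. So you need a two-rate argument: the uniform $\frac{5}{8}k$ envelope controls the transient (where your $e^{-13/2}$ estimate is fine), but pinning the equilibrium requires the cubic's local structure --- which is precisely the role the signal-restoration lemma plays in the paper, computing the rest point $E_2=p-A$ of the bounding cubic exactly rather than from a global linear envelope. With that repair, and the constant-chasing you already anticipated, your route goes through.
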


The remainder of this section is devoted to proving this theorem.
Since the proof requires examining an arbitrary perturbation of a variety of parameters, we begin the proof by fixing these perturbations.

Assume the hypothesis with $\N=(\U,\R,\S)=\text{NAND}_{\bfm{\delta},\tau}(X_1,X_2,Y)$.
We fix initial state $\xn\in\td^\S$ defined by $\xn(Y)=1$ and $\xn(\Ybar)=0$.
(Note that any choice satisfying $\xn(Y)+\xn(\Ybar)=1$ suffices for our argument.)
Let $\bc=(\bu,\V,h)$ be a context that satisfies the context assumption $\alpha(\bc)$.
Let $\bchat=(\buhat,\V,\hhat)$ be $(\delta_1,\delta_2)$-close to $\bc$, let $\xnhat$ be $\delta_3$-close to $\xn$, and let $\Nhat$ be $\delta_4$-close to $\N$.
It now suffices to show that the output function $\Nhat_{\bchat,\xnhat}$ is $\delta_1$-close to a signal $\bv\in C[\V]$ satisfying $\phi(\bu,\bv)$ of $\Phi_\text{NAND}$.
Let $\bxhat\in C[\S]$ as the unique solution generated by $\Nhat$ in context $\bchat$ on the initial state $\xnhat$.
For convenience, we write $y(t)$ and $\ybar(t)$ to denote $\bxhat(t)(Y)$ and $\bxhat(t)(\Ybar)$, respectively.

Using the reactions from Construction~\ref{const:robust_nand} along with the definition of the deterministic mass action system for an I/O tdCRN from equations~\eqref{eq:rate_definition_td}-\eqref{eq:ode_vector_definition_td}, we observe that the ODEs for $y(t)$ and $\ybar(t)$ are
\begin{align}
    \D{y} &= 3\khat_1 y^2\ybar - 3\khat_2 y \ybar^2 - \khat_3x_1x_2y + \khat_4\xbar_1\ybar + \khat_5\xbar_2\ybar,\label{eq:ode_y}\\
    \D{\ybar} &= -\D{y},\label{eq:ode_ybar}
\end{align}
where $\khat_1$, $\khat_2$, $\khat_3$, $\khat_4$, and $\khat_5$ are all time-varying $\delta_4$-perturbations of the rate constant $k$ and $x_1(t)$, $x_2(t)$, $\xbar_1(t)$, and $\xbar_2(t)$ are the four components of the $\delta_1$-perturbed input signal $\buhat(t)$.

Equation~\eqref{eq:ode_ybar} immediately implies that the total concentration of $Y$ and $\Ybar$ is constant, \ie, that $p=y(t)+\ybar(t)$ for all $t\in\td$ where $p=\xnhat(Y)+\xnhat(\Ybar)$.
It is also useful to note that $|p-1|<\delta_3$ since $\xnhat$ is a $\delta_3$-perturbation of $\xn$ which satisfies $\xn(Y)+\xn(\Ybar)=1$.

The I/O requirement $\phi(\bu,\bv)$ is the conjunction of two statements, and we prove each statement holds individually in Lemmas~\ref{lemma:case_11} and~\ref{lemma:case_0}.
Before proving these lemmas, we show that the solution $\bxhat(t)$ is bounded by the solution of much simpler systems of ODEs, and the analyses of these simpler ODEs are given in Lemmas~\ref{lemma:x_goesto_3/5} and \ref{lemma:x_signal_restoration}.
For convenience, we define the constant $d=\frac{\delta_4}{k}$.
\begin{lemma}\label{lemma:x_goesto_3/5}
    If $x(t)$ is the solution to the IVP defined by $x(0)=0$ and
    \begin{equation}\label{eq:simple_x_ode}
        \D{x} = k\left(-a  + b(p-x) - cx\right),
    \end{equation}
    where $a = \frac{p^3}{18} \left((3+d)^{3/2}+ 9d\right)$, $b = (1-d)(1-\delta_1)^2$, and $c = 2\delta_1(1+d)$, then $x(\frac{\tau}{2})>\frac{3}{5}$.
\end{lemma}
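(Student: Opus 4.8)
The plan is to exploit that, with the constants $a$, $b$, $c$, $p$, $k$ all fixed, equation~\eqref{eq:simple_x_ode} is a linear first-order ODE with constant coefficients. Writing it as $\D{x}=k(bp-a)-k(b+c)x$ and solving with $x(0)=0$ gives the closed form $x(t)=x_\infty\bigl(1-e^{-k(b+c)t}\bigr)$, where the steady state is $x_\infty=\frac{bp-a}{b+c}$. Since $x_\infty>0$ and $k(b+c)>0$, the solution rises monotonically from $0$ toward $x_\infty$, so it suffices to bound $x_\infty$ and the factor $1-e^{-k(b+c)\tau/2}$ from below and multiply them, which is legitimate because both are positive at every admissible parameter value.

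The factor is the easy part. From $k=100\delta_4+\frac{13}{\tau}$ one reads off $d=\frac{\delta_4}{k}<\frac{1}{100}$ and $k\tau\ge 13$, and a short computation gives $b+c\ge 1-d\ge\frac{99}{100}$. Hence $k(b+c)\frac{\tau}{2}\ge\frac{13}{2}\cdot\frac{99}{100}>6.4$, so $1-e^{-k(b+c)\tau/2}>1-e^{-6.4}>0.998$.

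The crux is the lower bound $x_\infty>\frac{3}{5}$, and here the tempting shortcut fails: bounding $bp$ and $a$ by their separate extremes over the parameter box certifies only $x_\infty>0.55$, which lies below $\frac{3}{5}$, since it wrongly pairs a small $p$ in $bp$ with a large $p$ in $a$. Instead I would keep $p$ coupled. One checks that the numerator $bp-a$ is increasing in $p$ on the admissible range (its $p$-derivative $b-\frac{p^2}{6}\bigl((3+d)^{3/2}+9d\bigr)$ stays positive there), so for fixed $d,\delta_1$ the numerator is smallest at $p=\frac{99}{100}$, recalling that $|p-1|<\delta_3<\frac{1}{100}$. With $p$ pinned there, $x_\infty$ is minimized over the remaining box at $d=\frac{1}{100}$ and $\delta_1=\frac{1}{25}$, since raising $d$ or $\delta_1$ lowers $b$ while increasing $a$ and $c$; I would confirm this by comparing the four $(d,\delta_1)$-corners. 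At that corner $b$ and $c$ are explicit and the only delicate estimate is $a=\frac{p^3}{18}\bigl((3+d)^{3/2}+9d\bigr)<0.287$, obtained from $\sqrt{3.01}<1.735$. This yields $x_\infty>0.62$, and combining with the previous paragraph gives $x(\frac{\tau}{2})>0.62\cdot 0.998>\frac{3}{5}$.

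I expect the main obstacle to be the tightness of the constant $\frac{3}{5}$: the worst-case steady state is only about $0.621$, so the upper bound on $a$ (the combined majority-reaction contribution) and the lower bound on $b$ must be made sharp, and one must respect the coupling of $p$ across the numerator rather than bounding its pieces independently.
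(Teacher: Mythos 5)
Your proposal is correct and follows essentially the same route as the paper: solve the linear ODE in closed form to get $x(t)=\frac{bp-a}{b+c}\left(1-e^{-k(b+c)t}\right)$ and then verify the bound by substituting the parameter constraints. The only difference is that you actually carry out the numerical verification the paper dismisses as ``easy to verify via substitution,'' and your observation that $p$ must be kept coupled between $bp$ and $a$ (since decoupled bounds only give $x_\infty>0.55$) is a worthwhile detail the paper omits.
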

\begin{proof}
    The single variable ODE~\eqref{eq:simple_x_ode} can be solved by separation of variables and integrating which yields
    \[
        x(t) = \frac{b p-a}{b+c}\left(1-e^{-k(b+c)t}\right).
    \]
    Using the facts that $\delta_1<\frac{1}{25}$, $d<\frac{1}{100}$, $\delta_3<\frac{1}{100}$, $|p-1|<\delta_3$ and $k>\frac{13}{\tau}$, it is easy to verify via substitution that $x\left(\frac{\tau}{2}\right) > \frac{3}{5}$.
\end{proof}

\begin{lemma}\label{lemma:x_signal_restoration}
    If $x(t)$ is the solution to the IVP defined by $x(0)=\frac{3}{5}$ and
    \begin{equation}\label{eq:simple_x_ode_2}
        \D{x} = ax^2(p-x) - bx(p-x)^2 -cx,
    \end{equation}
    where $a=3k(1-d)$, $b=3k(1+d)$, and $c=2k\delta_1(1+d)$,
    then $x(t)>p-\gamma$ for all $t\ge\frac{\tau}{2}$ where $\gamma=\delta_1-\delta_2-\delta_3$.
\end{lemma}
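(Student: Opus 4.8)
The plan is to treat~\eqref{eq:simple_x_ode_2} as a one-dimensional autonomous ODE $\D{x}=f(x)$ and read off its long-term behavior from the phase line. First I would factor the right-hand side: writing $f(x)=ax^2(p-x)-bx(p-x)^2-cx=x\bigl[(p-x)((a+b)x-bp)-c\bigr]$ and substituting $a+b=6k$, $bp=3k(1+d)p$, and $c=2k\delta_1(1+d)$ gives
\[
    f(x)=kx\bigl[\,3(p-x)(2x-(1+d)p)-2\delta_1(1+d)\,\bigr],
\]
so $f(x)/(kx)$ is a downward-opening parabola in $x$. Hence $f$ vanishes at $x=0$ and at the two roots $x_{\mathrm{mid}}<x^\ast$ of the bracket, namely $x_{\mathrm{mid}},x^\ast=\tfrac{1}{12}\bigl(3p(3+d)\mp\sqrt{9p^2(1-d)^2-48\delta_1(1+d)}\bigr)$. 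Using $\delta_1<\tfrac{1}{25}$, $d<\tfrac{1}{100}$, and $|p-1|<\delta_3$ the discriminant is positive, so both roots are real, and $f$ is negative on $(0,x_{\mathrm{mid}})$, positive on $(x_{\mathrm{mid}},x^\ast)$, and negative on $(x^\ast,p]$. Thus $x^\ast$ is asymptotically stable and every trajectory launched in $(x_{\mathrm{mid}},x^\ast)$ increases strictly monotonically to $x^\ast$.

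Second, I would locate the initial value. A direct estimate with the same parameter bounds shows $x_{\mathrm{mid}}<\tfrac{3}{5}<x^\ast$, so $x(0)=\tfrac{3}{5}$ lies inside the attracting interval and $x(t)$ is strictly increasing with $x(t)\uparrow x^\ast$.

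Third --- and this is the crux --- I would show the attractor clears the target, i.e.\ $p-\gamma<x^\ast$, equivalently $f(p-\gamma)>0$. Substituting $x=p-\gamma$ into the factored form reduces this to $3\gamma\bigl(2(p-\gamma)-(1+d)p\bigr)>2\delta_1(1+d)$, which I would verify by plugging in $\gamma=\delta_1-\delta_2-\delta_3$ and invoking $|p-1|<\delta_3$, $d<\tfrac{1}{100}$, and the hypotheses relating $\delta_1,\delta_2,\delta_3$. This single inequality is where all four perturbation parameters interact, and I expect it to be the main obstacle: it is what forces the relations among $\delta_1,\delta_2,\delta_3$ (and the bound $d=\delta_4/k$) in the theorem, and it is the step most sensitive to slack in those constants. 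Once it holds, $f(p-\gamma)>0$ makes the set $\{x\ge p-\gamma\}$ forward invariant, so as soon as $x$ exceeds $p-\gamma$ it remains above it for all later time.

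Finally I would supply the time bound, showing the crossing occurs by local time $\tfrac{\tau}{2}$. On the compact interval $[\tfrac{3}{5},p-\gamma]\subset(x_{\mathrm{mid}},x^\ast)$ the function $f$ is bounded below by a positive constant proportional to $k$, so $\D{x}\ge m>0$ there and the traversal time from $\tfrac{3}{5}$ to $p-\gamma$ is at most $\bigl((p-\gamma)-\tfrac{3}{5}\bigr)/m=O(1/k)$; since $k=100\delta_4+\tfrac{13}{\tau}>\tfrac{13}{\tau}$, this is at most $\tfrac{\tau}{2}$. (Alternatively, since $f(x)=-6kx(x-x_{\mathrm{mid}})(x-x^\ast)$, one may integrate exactly by partial fractions and read off the crossing time directly.) Combining monotone convergence to $x^\ast$, the threshold inequality $x^\ast>p-\gamma$, and the $O(1/k)$ crossing-time estimate yields $x(t)>p-\gamma$ for all $t\ge\tfrac{\tau}{2}$. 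The one subtlety to watch is that the crossing-time bound degrades as $p-\gamma$ approaches $x^\ast$, so the time estimate must be carried out using the definite margin produced by the crux inequality rather than treating $p-\gamma$ as arbitrarily close to $x^\ast$.
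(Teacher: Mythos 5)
Your proposal is correct in outline, but it takes a genuinely different route from the paper. The paper does not analyze the cubic ODE directly: it observes that~\eqref{eq:simple_x_ode_2} is the \emph{signal restoration} system studied in~\cite{cKlin16}, checks the two hypotheses of that paper's convergence theorems, and then invokes its Corollary 4.5 to obtain both the forward invariance of $\{x\ge p-\gamma\}$ and an explicit convergence-time formula $T$ (which it verifies is at most $\tfrac{\tau}{2}$). Your phase-line analysis re-proves those cited facts from scratch for this particular cubic: your $x_{\mathrm{mid}}$ and $x^\ast$ are exactly the paper's thresholds $E_1$ and attractor $E_2=p-A$, and your positivity of $f$ on the middle interval together with $x_{\mathrm{mid}}<\tfrac{3}{5}<x^\ast$ is precisely what makes the citation applicable; so the two arguments are the same theorem packaged differently. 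What your version buys is self-containedness and transparency about where the constants enter; what the paper's buys is brevity and a sharper (logarithmic) time bound from the exact integration --- your crude estimate $\D{x}\ge m>0$ degrades linearly in the margin $x^\ast-(p-\gamma)$, so you should indeed fall back on the partial-fractions integration you mention in parentheses. One substantive remark: you are right that $f(p-\gamma)>0$, equivalently $p-\gamma<x^\ast=E_2$, is the crux, and it is exactly the step the paper passes over with ``it is easy to verify.'' Carrying out your reduction, the inequality $3\gamma\left(p(1-d)-2\gamma\right)>2\delta_1(1+d)$ forces roughly $\gamma>\tfrac{2}{3}\delta_1$, i.e.\ $\delta_2+\delta_3\lesssim\tfrac{1}{3}\delta_1$, which is strictly stronger than the stated hypothesis $\delta_2+\delta_3<\delta_1$ (e.g.\ $\delta_1=0.015$, $\delta_2+\delta_3=0.009$ satisfies every stated constraint yet gives $x^\ast\approx 0.990<p-\gamma=0.994$). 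So your instinct to single this out as the fragile step is well placed; it is a tension in the lemma's constants that the paper's citation-based proof inherits as well, not a defect peculiar to your approach.
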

\begin{proof}
    The ODE~\eqref{eq:simple_x_ode_2} has been studied extensively and is sometimes referred to as a \emph{signal restoration algorithm}.
    According to two theorems proved in~\cite{cKlin16}, if the inequalities
    \begin{align}
        c <\frac{p^2a^2}{4(a+b)}\\
        x(0) > E_1,
    \end{align}
    hold where $E_1=p\left(\frac{b}{a+b}\right)+A$ such that $A=\frac{p}{2}\left(\frac{a}{a+b}\right)\left(1-\sqrt{1-c^\ast}\right)$ and $c^\ast=\frac{4c(a+b)}{p^2a^2}$, then $x(t)$ exponentially quickly converges to the value $E_2=p-A$.
    Using the facts that $d<\frac{1}{100}$, $\delta_1<\frac{1}{25}$, $\delta_3<\frac{1}{100}$ and $x(0)=\frac{3}{5}$, it is easy to verify that both of the above inequalities hold.

    Corollary 4.5 of~\cite{cKlin16} shows that under these conditions $x(t)$ will converge to the quantity $p-\gamma$ and remain above it indefinitely in at most time
    \[
        T = \frac{a+b}{abp^2(1-c^\ast)}\log u,
    \]
    where $u=\frac{(p-\gamma-E_1)(E_2-\frac{3}{5})}{(\frac{3}{5}-E_1)(E_2-p+\gamma)}$.
    Using the bounds of $d$, $\delta_1$, and $\delta_3$ and the fact that $k>\frac{13}{\tau}$, it is easy to verify that $T\le\frac{\tau}{2}$.
    Thus, $x(t) > p-\gamma$ for $t\ge\frac{\tau}{2}$.
\end{proof}

\begin{lemma}\label{lemma:case_11}
    If $I\in\bfm{I}(\tau)$ such that $\phi_{11}(I)$ holds, then $\psi_0(I)$ holds.
\end{lemma}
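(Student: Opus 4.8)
The plan is to follow the dual species $\ybar = p - y$ and show it is driven close to $p$ (equivalently, that $y$ is driven close to $0$) on all of $[t_1+\tau,t_2]$, and then to exhibit the clean signal as the witness $\bv$. Fix $I=[t_1,t_2]\in\bfm{I}(\tau)$ with $\phi_{11}(I)$. Because $\buhat$ is $\delta_1$-close to $\bu$ and $\bu$ encodes $X_1=X_2=1$ on $I$, the perturbed inputs satisfy $x_1,x_2\in[1-\delta_1,1+\delta_1]$ and $\xbar_1,\xbar_2\in[0,\delta_1]$ for all $t\in I$, while every time-varying rate lies in $[k(1-d),k(1+d)]$ with $d=\tfrac{\delta_4}{k}$. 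Rewriting~\eqref{eq:ode_y} via $\D{\ybar}=-\D{y}$ and $y=p-\ybar$ gives $\D{\ybar}=-3\khat_1 y^2\ybar+3\khat_2 y\ybar^2+\khat_3 x_1 x_2 y-\khat_4\xbar_1\ybar-\khat_5\xbar_2\ybar$, whose sign structure is exactly what I want: the set-one term $+\khat_3 x_1 x_2 y$ and the restoration cubic push $\ybar$ up, while the small leak terms $-\khat_4\xbar_1\ybar-\khat_5\xbar_2\ybar$ push it down.

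First I would produce a crude linear lower bound matching Lemma~\ref{lemma:x_goesto_3/5}. Replacing each rate and each input by whichever extreme value makes $\D{\ybar}$ smallest, and bounding the restoration contribution $3\khat_2 y\ybar^2-3\khat_1 y^2\ybar$ below by its global minimum over $\ybar\in[0,p]$ (the constant $a$ in~\eqref{eq:simple_x_ode} is precisely chosen to dominate this minimum), yields $\D{\ybar}\ge k\bigl(-a+b(p-\ybar)-c\ybar\bigr)$ with $a,b,c$ as in Lemma~\ref{lemma:x_goesto_3/5}. Since this right-hand side is Lipschitz in $\ybar$ and is dominated pointwise by the true dynamics on $[0,p]$, the standard ODE comparison theorem together with monotone dependence on initial data and the trivial bound $\ybar(t_1)\ge 0$ gives $\ybar\bigl(t_1+\tfrac{\tau}{2}\bigr)\ge x\bigl(\tfrac{\tau}{2}\bigr)>\tfrac{3}{5}$, where $x$ is the solution in Lemma~\ref{lemma:x_goesto_3/5}.

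Second, on $[t_1+\tfrac{\tau}{2},t_2]$ I would keep the restoration cubic and simply discard the nonnegative set-one term $\khat_3 x_1 x_2 y\ge 0$, which upgrades the bound to $\D{\ybar}\ge a\ybar^2(p-\ybar)-b\ybar(p-\ybar)^2-c\ybar$ with the constants of Lemma~\ref{lemma:x_signal_restoration} (now $a=3k(1-d)$, $b=3k(1+d)$, $c=2k\delta_1(1+d)$). Applying comparison again, this time from the initial value $\ybar\bigl(t_1+\tfrac{\tau}{2}\bigr)>\tfrac{3}{5}$, Lemma~\ref{lemma:x_signal_restoration} gives $\ybar(t)>p-\gamma$ for every $t$ at least $\tfrac{\tau}{2}$ past $t_1+\tfrac{\tau}{2}$, that is for all $t\in[t_1+\tau,t_2]$ (this interval is nonempty and lies in $I$ because $t_2-t_1\ge\tau$). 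Hence $y(t)=p-\ybar(t)<\gamma=\delta_1-\delta_2-\delta_3$ throughout $[t_1+\tau,t_2]$.

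Finally I would take $\bv$ to equal the clean value $\bigl(\bv(t)(Y),\bv(t)(\Ybar)\bigr)=(0,1)$ on $[t_1+\tau,t_2]$, so that $\psi_0(I)$ holds by definition, and verify that $\bv$ is $\delta_1$-close to the output $\Nhat_{\xnhat,\bchat}=\hhat(\bxhat)$ there. Since $\hhat$ is $\delta_2$-close to $h_0$, the measured point lies within $\delta_2$ of $(y,\ybar)$; and $(y,\ybar)$ is near $(0,1)$ since $y<\gamma$ and $|1-\ybar|=|1-p+y|\le\delta_3+\gamma=\delta_1-\delta_2$, using $|p-1|<\delta_3$. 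Chaining these estimates through the triangle inequality, and invoking the hypotheses $\delta_2+\delta_3<\delta_1<\tfrac{1}{25}$ and $\delta_2+\delta_3<\tfrac{1}{100}$, is what certifies the closeness; this is the step that consumes the numerical assumptions on $\bfm{\delta}$. I expect the main obstacle to be the two comparison arguments of the second and third paragraphs, namely choosing the worst-case rate and input substitutions so that the perturbed dynamics genuinely dominate the model ODEs~\eqref{eq:simple_x_ode} and~\eqref{eq:simple_x_ode_2} on the entire range $\ybar\in[0,p]$ and justifying monotone dependence on initial data, rather than the concluding norm estimate.
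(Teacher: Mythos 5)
Your proposal is correct and follows essentially the same route as the paper's proof: the same two-phase argument bounding $\D{\ybar}$ below first by the linear ODE of Lemma~\ref{lemma:x_goesto_3/5} on $[t_1,t_1+\frac{\tau}{2}]$ and then by the signal-restoration ODE of Lemma~\ref{lemma:x_signal_restoration}, followed by the same $\delta_2$/$\delta_3$ accounting to absorb measurement and initial-state error into $\gamma=\delta_1-\delta_2-\delta_3$. You are merely more explicit than the paper about the ODE comparison theorem and the construction of the witness signal $\bv$, both of which the paper leaves implicit.
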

\begin{proof}
    Assume the hypothesis for $I=[t_1,t_2]\in\bfm{I}(\tau)$.
    To show that $\psi_0(I)$ holds, we need to show that $1-\delta_2<\ybar(t)<1+\delta_2$ and $y(t) < \delta_2$ holds for all $t\in [t_1+\tau,t_2]$.
    Since $y(t)+\ybar(t) = p$, it suffices to show that $\ybar(t)>p-\gamma$ where $\gamma = \delta_1-\delta_2-\delta_3$ for all $t\in[t_1+\tau,t_2]$.
    We will show this by bounding the ODE of $\Ybar$ from equation~\eqref{eq:ode_ybar}.

    Since the perturbed rate constants are within $\delta_4$ of $k$, we know that
    \[
        \D{\ybar} \ge 3(k-\delta_4)\ybar^2y - 3(k+\delta_4)\ybar y^2 + (k-\delta_4)\khat_3x_1x_2y
            - (k+\delta_4)\khat_4\xbar_1\ybar - (k+\delta_4)\xbar_2\ybar.
    \]
    Thus if we let $d=\frac{\delta_4}{k}$, we can rewrite this equation as
    \begin{equation}\label{eq:ode_upper_bound_1}
        \D{\ybar} \ge k\big[3(1-d)\ybar^2y 
            - 3(1+d)\ybar y^2 + (1-d)x_1x_2y
            - (1+d)(\xbar_1+\xbar_2)\ybar\big].
    \end{equation}
    It is also not difficult to show that the expression $3(1-d)\ybar^2y - 3(1+d)\ybar y^2$ is minimized by letting $\ybar=\frac{p}{6}\left(d+3-\sqrt{d^2+3}\right)$.
    By substituting this into the expression, we obtain
    \begin{align*}
        3(1-d)\ybar^2y - 3(1+d)\ybar y^2
            &\ge -\frac{p^3}{18} \left(3 \sqrt{d^2+3}+d \left(d \left(\sqrt{d^2+3}-d\right)+9\right)\right)\\
            &\ge -\frac{p^3}{18} \left((3+d)^{3/2} + 9d\right).
    \end{align*}
    After substituting this into~\eqref{eq:ode_upper_bound_1} we obtain the bound
    \[
        \D{\ybar} \ge k\left[-\frac{p^3}{18} \left((3+d)^{3/2} + 9d\right) + (1-d)x_1x_2y - (1+d)(\xbar_1+\xbar_2)\ybar\right].
    \]

    Since $\phi_{11}(I)$ holds, we know that within the interval $I$ that $x_1$, $x_2$, $\xbar_1$, and $\xbar_2$ are encoding 1, 1, 0, and 0, respectively.
    However, these are only $\delta_1$-approximating these because of the input perturbation.
    Thus, for all $t\in I$ we have
    \[
        \D{\ybar} \ge k\left[-a  + b(p-\ybar) - c\ybar\right],
    \]
    where $a=\frac{p^3}{18} \left((3+d)^{3/2}+ 9d\right)$, $b=(1-d)(1-\delta_1)^2$, and $c=2\delta_1(1+d)$.
    By Lemma~\ref{lemma:x_goesto_3/5}, we know $\ybar(t_1+\frac{\tau}{2})\ge\frac{3}{5}$.

    To bound the behavior of $\Ybar$ after time $t_1+\frac{\tau}{2}$, we take another look at~\eqref{eq:ode_upper_bound_1} and see that
    \begin{align*}
        \D{\ybar}
            &\ge k\left[3(1-d)\ybar^2y - 3(1+d)\ybar y^2 - 2\delta_1(1+d)\ybar\right]\\
            &\ge a\ybar^2(p-\ybar) - b\ybar(p-\ybar)^2 -c\ybar,
    \end{align*}
    where $a=3k(1-d)$, $b=3k(1+d)$, and $c=2k\delta_1(1+d)$.
    By Lemma~\ref{lemma:x_signal_restoration}, we see that $\ybar(t)>p-\gamma$ for all $t\in[t_1+\tau,t_2]$ which also means that $y(t)<\gamma$ during that interval since $y(t)+\ybar(t)=p$.

    Finally, since $p>1-\delta_3$, $\gamma = \delta_1-\delta_2-\delta_3$, and the measurement function can only introduce $\delta_2$ amount of error, $\Nhat_{\xnhat,\bchat}(t)(\Ybar)>1-\delta_1$ and $\Nhat_{\xnhat,\bchat}(t)(Y)<\delta_1$.
    Therefore $\Nhat_{\xnhat,\bchat}$ is $\delta_1$-close to encoding an output of $Y=0$ and $\Ybar=1$ in the interval $[t_1+\tau,t_2]$.
\end{proof}

\begin{lemma}\label{lemma:case_0}
    If $I\in\bfm{I}(\tau)$ such that $\phi_0(I)$ holds, then $\psi_1(I)$ holds.
\end{lemma}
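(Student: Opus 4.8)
The plan is to mirror the proof of Lemma~\ref{lemma:case_11} almost verbatim, exploiting the symmetry of Construction~\ref{const:robust_nand} under the exchange $Y\leftrightarrow\Ybar$ together with the exchange of the ``set-one'' reaction~\eqref{reaction:set_one} with the ``set-zero'' reactions~\eqref{reaction:set_zero_1}--\eqref{reaction:set_zero_2}. Assume $\phi_0(I)$ holds for $I=[t_1,t_2]\in\bfm{I}(\tau)$. Now it is the value $Y=1$ that must be produced, so the goal becomes $y(t)>p-\gamma$ (equivalently $\ybar(t)<\gamma$) for all $t\in[t_1+\tau,t_2]$, where $\gamma=\delta_1-\delta_2-\delta_3$. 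Combined with $p>1-\delta_3$ and the fact that $\hhat$ introduces at most $\delta_2$ error, this yields $\Nhat_{\xnhat,\bchat}(t)(Y)>1-\delta_1$ and $\Nhat_{\xnhat,\bchat}(t)(\Ybar)<\delta_1$, so the output is $\delta_1$-close to encoding $Y=1,\Ybar=0$, exactly as in the closing paragraph of Lemma~\ref{lemma:case_11}.

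First I would lower-bound $\D{y}$ from~\eqref{eq:ode_y} by replacing each perturbed rate constant with $k(1\pm d)$ as its sign dictates, obtaining
\[
    \D{y} \ge k\big[3(1-d)y^2\ybar - 3(1+d)y\ybar^2 - (1+d)x_1x_2y + (1-d)(\xbar_1+\xbar_2)\ybar\big].
\]
The cubic amplification terms $3(1-d)y^2\ybar-3(1+d)y\ybar^2$ are identical to those bounded in Lemma~\ref{lemma:case_11} with $y$ and $\ybar$ interchanged; since $y+\ybar=p$, the same minimization gives the lower bound $-\frac{p^3}{18}\left((3+d)^{3/2}+9d\right)=-a$.

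For the first phase, $\phi_0(I)$ guarantees that at each $t\in I$ at least one input encodes $0$, hence at least one dual satisfies $\xbar_i\ge 1-\delta_1$, so $\xbar_1+\xbar_2\ge 1-\delta_1$; and since each valid input is a dual-rail bit with $x_i\le 1+\delta_1$ while one of $x_1,x_2$ is $\le\delta_1$, the opposing term obeys $x_1x_2\le\delta_1(1+\delta_1)<2\delta_1$. Using $y\le p$ and $(1-\delta_1)\ge(1-\delta_1)^2$, these reduce the bound to $\D{y}\ge k[-a+b(p-y)-cy]$ with $b=(1-d)(1-\delta_1)^2$ and $c=2\delta_1(1+d)$, which is precisely the ODE of Lemma~\ref{lemma:x_goesto_3/5}. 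Comparing $y$ against the solution of~\eqref{eq:simple_x_ode} started at $0$ (legitimate since $y(t_1)\ge 0$) gives $y(t_1+\frac{\tau}{2})\ge\frac{3}{5}$. For the second phase I would discard the nonnegative set-zero term and retain the worst-case set-one term $-(1+d)x_1x_2y\ge -2\delta_1(1+d)y$, turning the bound into $\D{y}\ge a y^2(p-y)-b y(p-y)^2-cy$ with $a=3k(1-d)$, $b=3k(1+d)$, $c=2k\delta_1(1+d)$, the signal-restoration ODE of Lemma~\ref{lemma:x_signal_restoration}. Since $y(t_1+\frac{\tau}{2})\ge\frac{3}{5}$, that lemma gives $y(t)>p-\gamma$ for all $t\ge t_1+\tau$, which is the desired estimate on $[t_1+\tau,t_2]$ and hence $\psi_1(I)$.

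The one genuinely new ingredient, and the step I expect to require the most care, is the asymmetry in controlling the opposing reaction. In Lemma~\ref{lemma:case_11} the interfering term $(\xbar_1+\xbar_2)\ybar$ is bounded by $2\delta_1\ybar$ because $\phi_{11}(I)$ pins \emph{both} duals near $0$; here $\phi_0(I)$ pins only one of $x_1,x_2$ near $0$, so the product $x_1x_2$ is small only through its factored form. I must therefore argue that the un-pinned input remains a bounded dual-rail bit ($\le 1+\delta_1$) in order to conclude $x_1x_2\le 2\delta_1$ and thereby recover the same constant $c=2\delta_1(1+d)$ used in Lemma~\ref{lemma:case_11}. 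Once this bound is secured (or the input-validity convention that guarantees it is invoked), everything else transfers directly from Lemma~\ref{lemma:case_11} via the $Y\leftrightarrow\Ybar$ symmetry.
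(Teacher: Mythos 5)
Your proposal follows essentially the same route as the paper's proof, which merely asserts the two symmetric differential inequalities and then invokes Lemmas~\ref{lemma:x_goesto_3/5} and~\ref{lemma:x_signal_restoration}; your derivation of those inequalities via the $Y\leftrightarrow\Ybar$ symmetry is correct. You are in fact more careful than the paper on the one genuinely asymmetric point --- bounding $x_1x_2\le 2\delta_1$ when $\phi_0(I)$ pins only one input near $0$ --- which the paper's two-line argument glosses over.
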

\begin{proof}
    During an interval $I=[t_1,t_2]$ satisfying $\phi_0(I)$, it is easy to show by a similar argument to Lemma~\ref{lemma:case_11} that the inequalities
    \[
        \D{y}\ge
            k\left[-\frac{p^3}{18} \left((3+d)^{3/2} + 9d\right) + (1-d)(1-\delta_1)(p-y) - 2(1+d)\delta_1\ybar\right]
    \]
    and
    \[
        \D{y}\ge
            k\left[3(1-d)y^2(p-y) - 3(1+d)y (p-y)^2 - 2\delta_1(1+d)y\right]
    \]
    hold for all $t\in I$.
    Thus by Lemmas~\ref{lemma:x_goesto_3/5} and \ref{lemma:x_signal_restoration}, we see that $y(t)>p-\gamma$ and $\ybar(t)<\gamma$ for all $t\in[t_1+\tau,t_2]$, and thus $\phi_1(I)$ holds.
\end{proof}

\section{Robust Combinatorial Circuits}
\label{sec:robust_circuits}
In this section, we state and prove our main theorem, namely, that every combinatorial circuit can be implemented with an I/O CRN.
For each combinatorial circuit, we define its requirement, give an I/O CRN construction for it, and prove the construction robustly satisfies its corresponding requirement.

Given positive integers $n,m>0$, we define an $n$-\emph{input} $m$-\emph{output combinatorial circuit} $C_{n,m}$ to be a directed acyclic graph where each node is a two-input one-output NAND gate.
The circuit $C_{n,m}$ has $n$ incoming edges called \emph{inputs} and $m$ outgoing edges called \emph{outputs}.
The \emph{depth} of a circuit $C_{n,m}$ is the longest path from an input to an output.
Each circuit $C_{n,m}$ can be regarded as a function $C_{n,m}:\{0,1\}^n\rightarrow\{0,1\}^m$ defined in the obvious way by computing the values of the outputs by propagating the input values through each of the NAND gates of the circuit.
Since NAND gates are universal for combinatorial circuits, this definition includes all possible functions for this class.
Furthermore, our dual-rail scheme gives access to the negation of each signal without any additional gates.
This substantially reduces the size of many circuits.

For a circuit $C_{n,m}$, we define the set of input species to be
\[
   \U = \{X_i,\Xbar_i\mid 0\le i < n\}\subseteq\bfm{S},
\]
and define the requirement $\Phi(C_{n,m},\tau)=(\phi,\alpha)$ where $\alpha$ is defined by
\begin{equation}
    \alpha(\bu,\V,h)\equiv\left[\V=\{Y_i,\Ybar_i\mid 0\le i < m\}\text{ and }h=h_0\right].
\end{equation}
To state the I/O requirement $\phi$, we need a bit more terminology.
For a string $w\in\{0,1\}^n$ and input $\bu\in C[\U]$, we use the notation $\bu(t)=w$ to denote that $\bu(t)(X_i)=w[i]$ and $\bu(t)(\Xbar_i)=1-w[i]$ for each $0\le i < n$.
We also define the predicates
\[
    \phi_w(I)\equiv(\forall t\in I)\big[\bu(t) = w\big],
    \qquad
    \psi_w(I)\equiv(\forall t\in [t_1+\tau,t_2])\big[\bv(t) = w\big],
\]
for all $I=[t_1,t_2]\in\bfm{I}(\tau)$.
The I/O requirement $\phi$ can then be defined by
\begin{equation}
    \phi(\bu,\bv)
        \equiv
        \big(\forall I\in\bfm{I}(\tau)\big)
        \big(\forall w\in\{0,1\}^n\big)
        \big[\phi_w(I)\rightarrow\psi_{C_{n,m}(w)}(I)\big].
\end{equation}
Thus, $\Phi(C_{n,m},\tau)$ simply requires that an I/O CRN generates the output $C_{n,m}(w)$ within $\tau$ time whenever the inputs encode $w\in\{0,1\}^n$.

We now give the I/O CRN construction for an arbitrary combinatorial circuit.
\begin{construction}\label{const:robust_circuit}
    Given a combinatorial circuit $C_{n,m}$ with $G$ gates and depth $d$ along with constants $\bfm{\delta}=(\delta_1,\delta_2,\delta_3,\delta_4)$, and $\tau>0$, define the CRN $\N(C_{n,m},\bfm{\delta},\tau)$ by joining $G$ copies of the I/O CRN $\text{NAND}_{\bfm{\delta},\frac{\tau}{d}}$ from Construction~\ref{const:robust_nand} according to the circuit $C_{n,m}$.
\end{construction}

As an example, consider a two-input one-output exclusive or (XOR) circuit.
Since negations are free in our motif, the XOR circuit can be constructed using three NAND gates, depicted in Figure~\ref{fig:54}.
\begin{figure}
    \centering
    \def\svgwidth{0.5\textwidth}
    %% Creator: Inkscape inkscape 0.92.3, www.inkscape.org
%% PDF/EPS/PS + LaTeX output extension by Johan Engelen, 2010
%% Accompanies image file 'Figure54New.pdf' (pdf, eps, ps)
%%
%% To include the image in your LaTeX document, write
%%   \input{<filename>.pdf_tex}
%%  instead of
%%   \includegraphics{<filename>.pdf}
%% To scale the image, write
%%   \def\svgwidth{<desired width>}
%%   \input{<filename>.pdf_tex}
%%  instead of
%%   \includegraphics[width=<desired width>]{<filename>.pdf}
%%
%% Images with a different path to the parent latex file can
%% be accessed with the `import' package (which may need to be
%% installed) using
%%   \usepackage{import}
%% in the preamble, and then including the image with
%%   \import{<path to file>}{<filename>.pdf_tex}
%% Alternatively, one can specify
%%   \graphicspath{{<path to file>/}}
%% 
%% For more information, please see info/svg-inkscape on CTAN:
%%   http://tug.ctan.org/tex-archive/info/svg-inkscape
%%
\begingroup%
  \makeatletter%
  \providecommand\color[2][]{%
    \errmessage{(Inkscape) Color is used for the text in Inkscape, but the package 'color.sty' is not loaded}%
    \renewcommand\color[2][]{}%
  }%
  \providecommand\transparent[1]{%
    \errmessage{(Inkscape) Transparency is used (non-zero) for the text in Inkscape, but the package 'transparent.sty' is not loaded}%
    \renewcommand\transparent[1]{}%
  }%
  \providecommand\rotatebox[2]{#2}%
  \newcommand*\fsize{\dimexpr\f@size pt\relax}%
  \newcommand*\lineheight[1]{\fontsize{\fsize}{#1\fsize}\selectfont}%
  \ifx\svgwidth\undefined%
    \setlength{\unitlength}{204.36215702bp}%
    \ifx\svgscale\undefined%
      \relax%
    \else%
      \setlength{\unitlength}{\unitlength * \real{\svgscale}}%
    \fi%
  \else%
    \setlength{\unitlength}{\svgwidth}%
  \fi%
  \global\let\svgwidth\undefined%
  \global\let\svgscale\undefined%
  \makeatother%
  \begin{picture}(1,0.77521465)%
    \lineheight{1}%
    \setlength\tabcolsep{0pt}%
    \put(0,0){\includegraphics[width=\unitlength,page=1]{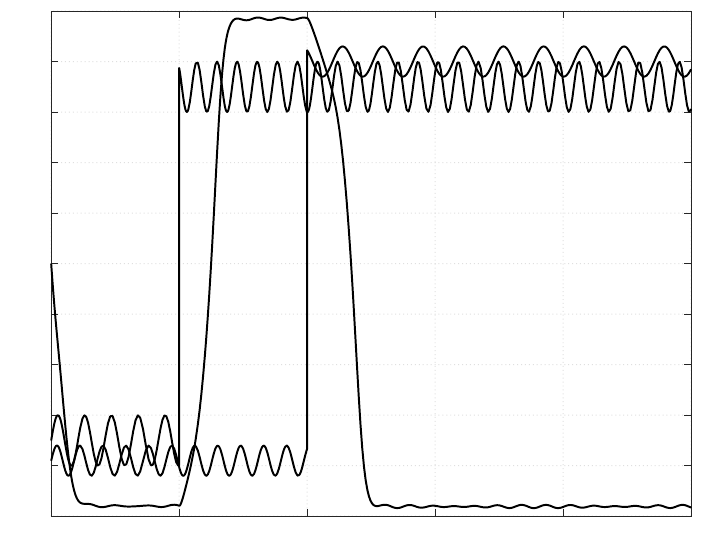}}%
    \put(0.165,0.23609322){\color[rgb]{0,0,0}\makebox(0,0)[lt]{\smash{\begin{tabular}[t]{l}$X_1$\end{tabular}}}}%
    \put(0.35,0.237154){\color[rgb]{0,0,0}\makebox(0,0)[lt]{\smash{\begin{tabular}[t]{l}$X_2$\end{tabular}}}}%
    \put(0.57816004,0.08433618){\color[rgb]{0,0,0}\makebox(0,0)[lt]{\smash{\begin{tabular}[t]{l}$Y$\end{tabular}}}}%
    \put(0,0){\includegraphics[width=\unitlength,page=2]{Figure54New.pdf}}%
  \end{picture}%
\endgroup%

    \raisebox{0.6in}[0pt][0pt]{
        \begin{tikzpicture}[align=center]
            \node[nand gate US, draw, rotate=0, logic gate inputs=in] at (0,1) (n1) {};
            \node[nand gate US, draw, rotate=0, logic gate inputs=ni] at (0,0) (n2) {};
            \node[nand gate US, draw, rotate=0, logic gate inputs=nn] at ($(n2) + (1,0.5)$) (n3) {};
            \node (x1) at ($(n1.input 1) + (-1,0)$) {$X_1$};
            \node (x2) at ($(n2.input 2) + (-1,0)$) {$X_2$};

            \draw (x1) -- (n1.input 1);
            \draw (x2) -- (n2.input 2);
            \draw (x1) -- ($(n1.input 1) + (-0.5,0)$) node[branch] {} |- (n2.input 1);
            \draw (x2) -- ($(n2.input 2) + (-0.15,0)$) node[branch] {} |- (n1.input 2);
            \draw (n1.output) -- ($(n1.output) + (0.1,0)$) |- (n3.input 1);
            \draw (n2.output) -- ($(n2.output) + (0.1,0)$) |- (n3.input 2);
            \draw (n3.output) -- node[above] {$Y$} ($(n3.output)+(0.5,0)$);
        \end{tikzpicture}
    }
    \caption{XOR circuit with sinusoidal noise\label{fig:54}}
\end{figure}
According to Construction~\ref{const:robust_circuit}, the I/O CRN defined by this circuit is
\[
   \N = \N(C_{n,m},\bfm{\delta},\tau)=\N_1\sqcup\N_2\sqcup\N_3,
\]
where
\begin{align*}
    \N_1&=\text{NAND}_{\bfm{\delta},\frac{\tau}{2}}(\Xbar_1,X_2,Z_1),\\
    \N_2&=\text{NAND}_{\bfm{\delta},\frac{\tau}{2}}(X_1,\Xbar_2,Z_2),\text{ and}\\
    \N_3&=\text{NAND}_{\bfm{\delta},\frac{\tau}{2}}(Z_1,Z_2,Y).
\end{align*}
For convenience, we assume that the dual of $\Xbar$ is $X$ so that negations are handled intuitively.
The unlabeled intermediate wires correspond to the state species $Z_1$ and $Z_2$ of $\N$ and are neither inputs nor outputs of the XOR circuit.
The I/O CRN $\N$ is \emph{modular} since $\N_1$, $\N_2$, and $\N_3$ do not share any state species.
In fact, every I/O CRN produced by Construction~\ref{const:robust_circuit} is a modular join of NAND gates since combinatorial circuits are acyclic.

We now state the main theorem of the paper.
\begin{theorem}
    If $C_{n,m}$ is a combinatorial circuit, the constants $\bfm{\delta}=(\delta_1,\delta_2,\delta_3,\delta_4)\in(0,\infty)^4$ and $\tau > 0$ satisfy
    $\delta_2+\delta_3 < \delta_1 < \frac{1}{25}$,
    $\delta_2+\delta_3 < \frac{1}{100}$,
    and $\N=\N(C_{n,m},\bfm{\delta},\tau)$ is constructed according to Construction~\ref{const:robust_circuit}, then
    $\N\models_{\delta_1}^{\bfm{\delta}} \Phi(C_{n,m},\tau)$.
\end{theorem}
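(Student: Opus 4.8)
The plan is to reduce the circuit to its constituent NAND gates using the modularity of Construction~\ref{const:robust_circuit}, and then to propagate correctness through the circuit one layer at a time by induction on gate depth, spending Theorem~\ref{theorem:robust_nand} once per gate.

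First I would fix the initial state $\xn$ by setting, for every NAND copy in $\N$, its two dual output species to $1$ and $0$ (any pair summing to $1$ works, as in the single-gate proof), and fix an arbitrary context $\bc=(\bu,\V,h)$ with $\alpha(\bc)$ together with a $(\delta_1,\delta_2)$-close $\bchat$, a $\delta_3$-close $\xnhat$, and a $\delta_4$-close $\Nhat$. Because $C_{n,m}$ is acyclic, $\N$ is a \emph{modular} join of $G$ copies of $\text{NAND}_{\bm{\delta},\tau/d}$, so by the modularity remark of Section~\ref{sub:robustness} each copy keeps satisfying its own requirement inside $\N$: the gates share no state species, and since input species are read only catalytically (Section~\ref{sub:iocrns}), a downstream gate's reactions have zero net effect on any upstream species and therefore cannot disturb an upstream gate's dynamics. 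Consequently, if the trajectory of a gate's predecessors is frozen, that gate's state species obey exactly the standalone NAND system~\eqref{eq:ode_y}--\eqref{eq:ode_ybar} driven by a real-analytic input, and the restriction of the global $\delta_3$- and $\delta_4$-perturbations to that gate is again a $\delta_3$- and $\delta_4$-perturbation of it.

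Next I would assign each gate a \emph{level} $\ell(g)\in\{1,\dots,d\}$ equal to the number of gates on the longest input-to-$g$ path, fix an interval $I=[t_1,t_2]\in\bfm{I}(\tau)$ with $\phi_w(I)$, and write $\tau'=\tau/d$. The induction hypothesis, on $\ell$, is that the dual output concentrations of every level-$\ell$ gate lie within $\delta_1$ of the clean encoding of the bit that gate computes on input $w$, throughout $[t_1+\ell\tau',\,t_2]$. For the base case a level-$1$ gate sees only circuit inputs, which encode $w$ to within $\delta_1$ on all of $I$, an interval of length $\ge\tau\ge\tau'$, so Theorem~\ref{theorem:robust_nand} produces the clean output on $[t_1+\tau',t_2]$. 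For the step, a level-$(\ell+1)$ gate has all predecessors at level $\le\ell$, so by hypothesis each of its dual-rail inputs is $\delta_1$-close to its clean value simultaneously on $[t_1+\ell\tau',\,t_2]$; since $\ell+1\le d$ this interval has length $\ge\tau-\ell\tau'\ge\tau'$ and hence lies in $\bfm{I}(\tau')$, so Theorem~\ref{theorem:robust_nand} advances the clean window by one delay to $[t_1+(\ell+1)\tau',\,t_2]$. As $d\tau'=\tau$ and every gate---in particular every output gate---has level at most $d$, each output pair is $\delta_1$-close to the clean encoding of $C_{n,m}(w)$ on $[t_1+\tau,t_2]$; folding in the $\delta_2$ measurement error exactly as at the end of Lemma~\ref{lemma:case_11} yields $\psi_{C_{n,m}(w)}(I)$, so $\Nhat_{\xnhat,\bchat}$ is $\delta_1$-close to a signal $\bv$ with $\phi(\bu,\bv)$, which is the conclusion.

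The step I expect to be the main obstacle is the perturbation bookkeeping hidden in the inductive step. A level-$(\ell+1)$ gate is fed by up to two upstream gates, each output only guaranteed $\delta_1$-close to clean, so the joint deviation of the gate's four input species in the Euclidean sup-norm $\norm{\cdot}$ of Section~\ref{sub:robustness} can exceed $\delta_1$, and Theorem~\ref{theorem:robust_nand} \emph{as literally stated} need not apply. The resolution I would use is that the single-gate argument only ever uses the \emph{per-species} bounds $x_1,x_2\ge 1-\delta_1$ and $\xbar_1,\xbar_2\le\delta_1$ (the substitutions producing the coefficients $b$ and $c$ in Lemma~\ref{lemma:case_11}), which is exactly what the induction supplies rail by rail; the gate therefore fires correctly even though the aggregate Euclidean perturbation is larger. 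One must also be careful to drive each internal gate with the \emph{exact} upstream concentration---the intermediate bound $\ybar>p-\gamma$ of Lemma~\ref{lemma:case_11} rather than the post-measurement bound---reserving the $\delta_2$ measurement error for the circuit's true outputs only.
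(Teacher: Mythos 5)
Your plan follows exactly the same route as the paper's own (very terse) proof: modular decomposition into NAND gates, one application of Theorem~\ref{theorem:robust_nand} per gate, and propagation of correctness level by level so that a depth-$d$ circuit accumulates a total delay of $d\cdot\frac{\tau}{d}=\tau$. Your version is a correct and considerably more careful elaboration --- in particular, the perturbation-bookkeeping issue you flag (internal gates receive inputs that are only $\delta_1$-close rather than exact, so Theorem~\ref{theorem:robust_nand} as literally stated does not directly apply and one must fall back on the per-species bounds used inside Lemma~\ref{lemma:case_11}) is a genuine subtlety that the paper's one-paragraph proof silently glosses over.
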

\begin{proof}
    This theorem immediately follows from the fact that $\N$ consists of a modular family of NAND gates and by Theorem~\ref{theorem:robust_nand} each individual NAND gate is robust.
    Thus, each NAND gate produces an output signal that is $\delta_1$-close to its appropriate binary value within $\frac{\tau}{d}$ time.
    Since $d$ is the depth of the circuit, the total propagation delay for the circuit is at most $\tau$.
\end{proof}

To demonstrate the robustness of these circuits, Figure~\ref{fig:54} also visualizes the output of the XOR circuit on a noisy input signal.
The simulation shows inputs that transition from low to high at different times, different levels, and different noise amplitudes.

\section{Robust Memory Components}
\label{sec:robust_memory}
Memory is essential to compute most algorithms, so limiting ourselves only to combinatorial circuits is too restricting.
The basic memory components of modern circuits are latches and flip flops, but these circuits are \emph{sequential} and depend on cyclic feedback to store data.
As a result, the techniques from the previous section do not apply, since joining our NAND gates together in a cyclic environment may cause them to send and receive signals that are not binary.
This can cause failure since the behavior of our NAND gate is undefined on non-binary inputs.

In this section, we show that I/O CRNs are capable of robustly simulating two common memory components.
In Section~\ref{sub:sr_latch}, we show that an SR latch can be robustly simulated by two NAND gates, and in Section~\ref{sub:d_latch}, we introduce a new I/O CRN design that robustly simulates a D latch.
A D latch is traditionally implemented using two SR latches; however, our I/O CRN construction uses fewer reactions than a single NAND gate.

\subsection{SR Latch}\label{sub:sr_latch}
The set-reset latch (SR latch) is a simple and commonly used memory element in digital circuits.
Composed of two NAND gates, the latch operates with two inputs, usually named $\Sbar$ and $\Rbar$, and has three stable states.
First, if $\Sbar$ is 0 and $\Rbar$ is 1, then the output $Q$ will be 1, \ie, $Q$ is \emph{set}.
Similarly, if $\Rbar$ is 0 and $\Sbar$ is 1, then the output $Q$ is 0, \ie, $Q$ is \emph{reset}.
If both $\Sbar$ and $\Rbar$ are 1, the output $Q$ maintains its previous value, \ie, $Q$ is \emph{held}.
A schematic diagram of the SR latch is shown in Figure~\ref{fig:srlatch}.
\begin{figure}
    \centering
    \hspace*{4em}
	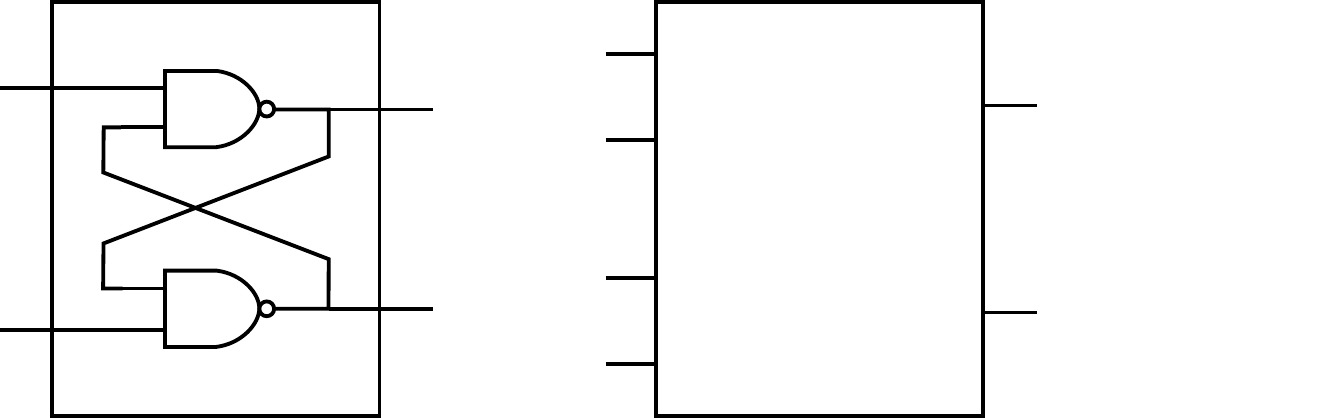
    \caption{\label{fig:srlatch}
        SR latch implemented with two NAND gates, and SR block diagram with labeled species inputs and outputs.
    }
\end{figure}

To show that this SR latch is robust, we begin by specifying its requirement.
We first define the set of input species, set of output species, and some useful predicates.
Given $\Sbar,\Rbar\in\bfm{S}$, we define the set of input species to be $\U = \{S,\Sbar,R,\Rbar\}\subseteq\bfm{S}$, and given $Q_1,\Qbar_2\in\bfm{S}$, we let the set of output species be $V=\{Q_1,\Qbar_1,Q_2,\Qbar_2\}\subseteq\bfm{S}$.
Given $\tau>0$, we also define the predicates
\begin{align}
    \phi_\text{set}(I)
        &\equiv
        \dbracket{\Sbar = 0}_{[t_1,t_1+\tau]}
        \land
        \dbracket{\Rbar = 1}_I\\
    \phi_\text{reset}(I)
        &\equiv
        \dbracket{\Rbar = 0}_{[t_1,t_1+\tau]}
        \land
        \dbracket{\Sbar = 1}_I\\
    \psi_a(I)
        &\equiv
        \dbracket{Q_1 = a}_{[t_1+\tau,t_2]}
        \land
        \dbracket{Q_2 = a}_{[t_1+\tau,t_2]},
\end{align}
for all intervals $I=[t_1,t_2]\in\bfm{I}(\tau)$.
Note that $\phi_\text{set}$ and $\phi_\text{reset}$ only require that $\Sbar=0$ and $\Rbar=0$ for the first $\tau$ time of $I$, but they require $\Rbar=1$ and $\Sbar=1$ for the entire interval $I$, respectively.
This allows inputs to transition between the set/reset state to the hold state while satisfying $\phi_\text{set}$/$\phi_\text{reset}$.

Given a $\tau>0$, we then define the SR latch requirement to be $\Phi_\text{SR}(\tau)=(\alpha,\phi)$ where the context assumption $\alpha$ is defined by
\begin{equation}
    \alpha(\bu,\V,h) \equiv \left[\V=\{Q_1,\Qbar_1,Q_2,\Qbar_2\}\text{ and }h=h_0\right],
\end{equation}
and the I/O requirement $\phi$ is defined by
\begin{equation}
    \phi(\bu,\bv)
        \equiv\big(\forall I\in\bfm{I}(\tau)\big)
            \left[
                \left(\phi_\text{set}(I)\rightarrow\psi_1(I)\right)
                \land
                \left(\phi_\text{reset}(I)\rightarrow\psi_0(I)\right)
            \right].
\end{equation}
Intuitively, the requirement $\Phi_\text{SR}$ requires that whenever $\Sbar=0$ and $\Rbar=1$ for at least $\tau$ time, then $Q=1$ within that time and remains there until $\Rbar\ne1$.
It also requires that if $\Sbar=1$ and $\Rbar=0$ for at least $\tau$ time, then $Q=0$ until $\Sbar$ is no longer 1.
A visualization of the input/output relationship is included in the timing diagram of Figure~\ref{fig:srtiming}.
\begin{figure}
    \centering
    \subfloat[]{
        \def\svgwidth{0.4\textwidth}
        %% Creator: Inkscape inkscape 0.92.3, www.inkscape.org
%% PDF/EPS/PS + LaTeX output extension by Johan Engelen, 2010
%% Accompanies image file 'SRFigureIdealNew.pdf' (pdf, eps, ps)
%%
%% To include the image in your LaTeX document, write
%%   \input{<filename>.pdf_tex}
%%  instead of
%%   \includegraphics{<filename>.pdf}
%% To scale the image, write
%%   \def\svgwidth{<desired width>}
%%   \input{<filename>.pdf_tex}
%%  instead of
%%   \includegraphics[width=<desired width>]{<filename>.pdf}
%%
%% Images with a different path to the parent latex file can
%% be accessed with the `import' package (which may need to be
%% installed) using
%%   \usepackage{import}
%% in the preamble, and then including the image with
%%   \import{<path to file>}{<filename>.pdf_tex}
%% Alternatively, one can specify
%%   \graphicspath{{<path to file>/}}
%% 
%% For more information, please see info/svg-inkscape on CTAN:
%%   http://tug.ctan.org/tex-archive/info/svg-inkscape
%%
\begingroup%
  \makeatletter%
  \providecommand\color[2][]{%
    \errmessage{(Inkscape) Color is used for the text in Inkscape, but the package 'color.sty' is not loaded}%
    \renewcommand\color[2][]{}%
  }%
  \providecommand\transparent[1]{%
    \errmessage{(Inkscape) Transparency is used (non-zero) for the text in Inkscape, but the package 'transparent.sty' is not loaded}%
    \renewcommand\transparent[1]{}%
  }%
  \providecommand\rotatebox[2]{#2}%
  \newcommand*\fsize{\dimexpr\f@size pt\relax}%
  \newcommand*\lineheight[1]{\fontsize{\fsize}{#1\fsize}\selectfont}%
  \ifx\svgwidth\undefined%
    \setlength{\unitlength}{158.40000029bp}%
    \ifx\svgscale\undefined%
      \relax%
    \else%
      \setlength{\unitlength}{\unitlength * \real{\svgscale}}%
    \fi%
  \else%
    \setlength{\unitlength}{\svgwidth}%
  \fi%
  \global\let\svgwidth\undefined%
  \global\let\svgscale\undefined%
  \makeatother%
  \begin{picture}(1,0.75810946)%
    \lineheight{1}%
    \setlength\tabcolsep{0pt}%
    \put(0,0){\includegraphics[width=\unitlength,page=1]{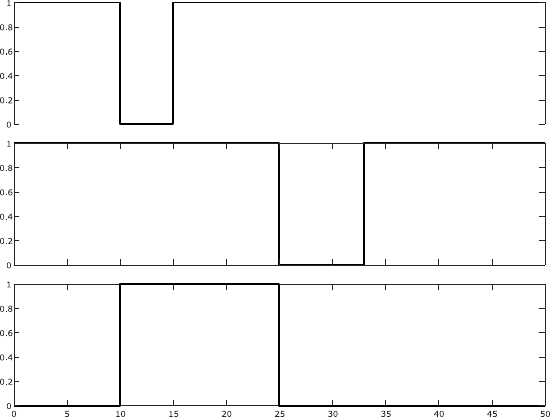}}%
    \put(0.09125695,0.68442721){\color[rgb]{0,0,0}\makebox(0,0)[lt]{\lineheight{0.08333334}\smash{\begin{tabular}[t]{l}$\Sbar$\end{tabular}}}}%
    \put(0.09337743,0.41560466){\color[rgb]{0,0,0}\makebox(0,0)[lt]{\lineheight{0.08333334}\smash{\begin{tabular}[t]{l}$\Rbar$\end{tabular}}}}%
    \put(0.09935234,0.16926704){\color[rgb]{0,0,0}\makebox(0,0)[lt]{\lineheight{0.08333334}\smash{\begin{tabular}[t]{l}$Q$\end{tabular}}}}%
    \put(0,0){\includegraphics[width=\unitlength,page=2]{SRFigureIdealNew.pdf}}%
  \end{picture}%
\endgroup%

     }
    \hspace*{2em}
    \subfloat[]{
        \def\svgwidth{0.4\textwidth}
        %% Creator: Inkscape inkscape 0.92.3, www.inkscape.org
%% PDF/EPS/PS + LaTeX output extension by Johan Engelen, 2010
%% Accompanies image file 'SRFigureNoise1New.pdf' (pdf, eps, ps)
%%
%% To include the image in your LaTeX document, write
%%   \input{<filename>.pdf_tex}
%%  instead of
%%   \includegraphics{<filename>.pdf}
%% To scale the image, write
%%   \def\svgwidth{<desired width>}
%%   \input{<filename>.pdf_tex}
%%  instead of
%%   \includegraphics[width=<desired width>]{<filename>.pdf}
%%
%% Images with a different path to the parent latex file can
%% be accessed with the `import' package (which may need to be
%% installed) using
%%   \usepackage{import}
%% in the preamble, and then including the image with
%%   \import{<path to file>}{<filename>.pdf_tex}
%% Alternatively, one can specify
%%   \graphicspath{{<path to file>/}}
%% 
%% For more information, please see info/svg-inkscape on CTAN:
%%   http://tug.ctan.org/tex-archive/info/svg-inkscape
%%
\begingroup%
  \makeatletter%
  \providecommand\color[2][]{%
    \errmessage{(Inkscape) Color is used for the text in Inkscape, but the package 'color.sty' is not loaded}%
    \renewcommand\color[2][]{}%
  }%
  \providecommand\transparent[1]{%
    \errmessage{(Inkscape) Transparency is used (non-zero) for the text in Inkscape, but the package 'transparent.sty' is not loaded}%
    \renewcommand\transparent[1]{}%
  }%
  \providecommand\rotatebox[2]{#2}%
  \newcommand*\fsize{\dimexpr\f@size pt\relax}%
  \newcommand*\lineheight[1]{\fontsize{\fsize}{#1\fsize}\selectfont}%
  \ifx\svgwidth\undefined%
    \setlength{\unitlength}{158.40000029bp}%
    \ifx\svgscale\undefined%
      \relax%
    \else%
      \setlength{\unitlength}{\unitlength * \real{\svgscale}}%
    \fi%
  \else%
    \setlength{\unitlength}{\svgwidth}%
  \fi%
  \global\let\svgwidth\undefined%
  \global\let\svgscale\undefined%
  \makeatother%
  \begin{picture}(1,0.75810946)%
    \lineheight{1}%
    \setlength\tabcolsep{0pt}%
    \put(0,0){\includegraphics[width=\unitlength,page=1]{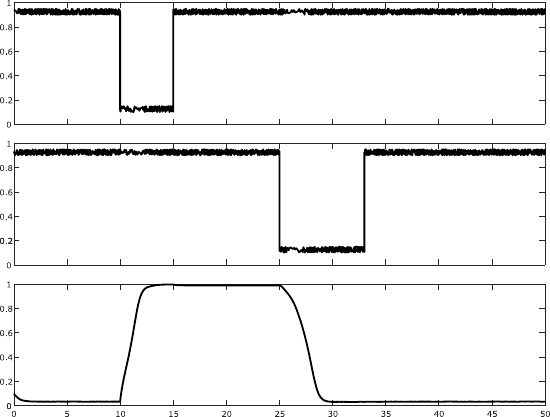}}%
    \put(0.09887419,0.66245706){\color[rgb]{0,0,0}\makebox(0,0)[lt]{\lineheight{0.08333334}\smash{\begin{tabular}[t]{l}$\Sbar$\end{tabular}}}}%
    \put(0.0982313,0.39832014){\color[rgb]{0,0,0}\makebox(0,0)[lt]{\lineheight{0.08333334}\smash{\begin{tabular}[t]{l}$\Rbar$\end{tabular}}}}%
    \put(0.09835379,0.15611707){\color[rgb]{0,0,0}\makebox(0,0)[lt]{\lineheight{0.08333334}\smash{\begin{tabular}[t]{l}$Q$\end{tabular}}}}%
  \end{picture}%
\endgroup%
  
    }
    \caption{\label{fig:srtiming}
        Ideal SR latch timing diagram along with an I/O CRN simulation of our SR latch with random noise}
\end{figure}

We now state the construction of the SR latch.
\begin{construction}\label{const:sr_latch}
    Given four species $\Sbar,\Rbar,Q_1,\Qbar_2$, a vector of strictly positive real numbers $\bfm{\delta}=(\delta_1,\delta_2,\delta_3,\delta_4)$, and $\tau>0$, define the CRN
    \[
        \text{SR}_{\bfm{\delta},\tau}(\Sbar,\Rbar,Q_1,\Qbar_2)
            = \N_1\sqcup\N_2,
    \]
    where $\N_1=\text{NAND}_{\bfm{\delta},\frac{\tau}{2}}(\Sbar,\Qbar_2,Q_1)$ and $\N_2=\text{NAND}_{\bfm{\delta},\frac{\tau}{2}}(\Rbar,Q_1,\Qbar_2)$.
\end{construction}

We now prove that our construction robustly satisfies $\Phi_\text{SR}$.
Our proof shows that the requirements of the two subcomponents suffice to prove the high-level requirement of the SR latch.

\begin{theorem}
    If $\bfm{\delta}=(\delta_1,\delta_2,\delta_3,\delta_4)\in(0,\infty)^4$ and $\tau > 0$ are constants satisfying $\delta_2+\delta_3 < \delta_1 < \frac{1}{25}$ and $\delta_2+\delta_3 < \frac{1}{100}$, then $\text{SR}_{\bfm{\delta},\tau}(\Sbar,\Rbar,Q_1,\Qbar_2)\models_{\delta_1}^{\bfm{\delta}} \Phi_\text{SR}(\tau)$.
\end{theorem}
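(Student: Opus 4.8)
The plan is to exploit the modular structure of Construction~\ref{const:sr_latch}. Since $\N_1$ and $\N_2$ have disjoint state species $\{Q_1,\Qbar_1\}$ and $\{Q_2,\Qbar_2\}$, their join is modular, and each gate independently conserves its own dual-rail sum; thus $Q_1,\Qbar_1,Q_2,\Qbar_2$ all remain in $[0,p]$ with $p\approx 1$. I would fix an initial state (the exact encoding is immaterial, as in the NAND proof) and prove the two implications $\phi_\text{set}(I)\to\psi_1(I)$ and $\phi_\text{reset}(I)\to\psi_0(I)$ separately. The two arguments are mirror images under the exchange of $\N_1$ with $\N_2$ and of $\Sbar$ with $\Rbar$, so I would write out only the set case on an interval $I=[t_1,t_2]$.

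I would establish the set value by a two-phase bootstrapping argument. During $[t_1,t_1+\tau]$ the hypothesis $\phi_\text{set}(I)$ gives $\Sbar\approx 0$ (hence $S\approx 1$), which is exactly the $\phi_0$-condition for the gate $\N_1=\text{NAND}_{\bfm{\delta},\frac{\tau}{2}}(\Sbar,\Qbar_2,Q_1)$, and crucially it holds \emph{regardless} of the feedback signal $\Qbar_2$. Because $\Qbar_2\in[0,p]$ is bounded, the only $Q_1$-consuming term, whose rate is proportional to $\Sbar\,\Qbar_2\,Q_1$, is $O(\delta_1)$, so the same lower bounds on $\D{Q_1}$ derived in Lemma~\ref{lemma:case_0} hold and Lemmas~\ref{lemma:x_goesto_3/5} and~\ref{lemma:x_signal_restoration} force $Q_1>p-\gamma$ on $[t_1+\frac{\tau}{2},t_1+\tau]$, where $\gamma=\delta_1-\delta_2-\delta_3$. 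In the second phase, on $[t_1+\frac{\tau}{2},t_1+\tau]$ both inputs of $\N_2=\text{NAND}_{\bfm{\delta},\frac{\tau}{2}}(\Rbar,Q_1,\Qbar_2)$ are now $\ge 1-\delta_1$ (here $\Rbar\approx 1$ throughout $I$ and $Q_1\approx 1$ from phase one), so the $\phi_{11}$-analysis of Lemma~\ref{lemma:case_11} drives $\Qbar_2\to 0$, i.e. $Q_2\to 1$, within this gate's delay $\frac{\tau}{2}$. Hence both $Q_1\approx 1$ and $\Qbar_2\approx 0$ hold at time $t_1+\tau$.

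The main obstacle is the \emph{hold} phase $[t_1+\tau,t_2]$: once $\Sbar$ is permitted to leave $0$, the value must be sustained purely by the cyclic feedback, and one cannot simply invoke Theorem~\ref{theorem:robust_nand} on a fixed interval because each gate's input predicate now depends on the other gate's output. I would break this circularity with a forward-invariance argument for the region $G=\{Q_1>p-\gamma,\ \Qbar_2<\gamma\}$. On the face $\Qbar_2=\gamma$ the signal $\Qbar_2\approx 0$ still satisfies the $\phi_0$-condition of $\N_1$ through its \emph{other} input, so Lemma~\ref{lemma:case_0} keeps $\D{Q_1}$ pushing $Q_1$ upward; on the face $Q_1=p-\gamma$ the pair $(Q_1,\Rbar)\approx(1,1)$ satisfies the $\phi_{11}$-condition of $\N_2$, so Lemma~\ref{lemma:case_11} keeps $\D{\Qbar_2}$ pushing $\Qbar_2$ downward. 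Thus the vector field points into $G$ on its boundary, and since the state lies in $G$ at $t_1+\tau$ by the bootstrapping phase, it remains in $G$ for all $t\in[t_1+\tau,t_2]$. I expect the most delicate point to be the handoff at $t=t_1+\tau$, where phase two only just delivers $\Qbar_2\approx 0$ at the very instant $\Sbar$ is allowed to switch toward $1$; I would make this rigorous either by the boundary argument above or, equivalently, by a first-exit-time contradiction showing that the first time either invariant is violated cannot exist.

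Finally, on $[t_1+\tau,t_2]$ we have $Q_1>p-\gamma$ and $Q_2=p-\Qbar_2>p-\gamma$; since $p>1-\delta_3$ and the measurement function contributes at most $\delta_2$ of additional error, both measured outputs lie within $\delta_1$ of the encoding $Q_1=Q_2=1$, which is exactly $\psi_1(I)$ up to the permitted $\delta_1$-perturbation. The reset case is identical with the roles of $\N_1,\N_2$ and of $\Sbar,\Rbar$ interchanged, yielding $\psi_0(I)$, so combining the two implications gives $\phi(\bu,\bv)$ and hence $\text{SR}_{\bfm{\delta},\tau}(\Sbar,\Rbar,Q_1,\Qbar_2)\models_{\delta_1}^{\bfm{\delta}}\Phi_\text{SR}(\tau)$. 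Throughout I would remark that the perturbation does not compound around the loop: each gate turns a $\delta_1$-perturbed input into a $\delta_1$-close (not $2\delta_1$-close) output, and in fact the internally fed signal $Q_1$ deviates from $1$ by at most $\gamma+\delta_3=\delta_1-\delta_2<\delta_1$, comfortably within $\N_2$'s input tolerance.
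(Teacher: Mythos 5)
Your proposal takes the same route as the paper: decompose $\text{SR}_{\bfm{\delta},\tau}$ into $\N_1=\text{NAND}_{\bfm{\delta},\frac{\tau}{2}}(\Sbar,\Qbar_2,Q_1)$ and $\N_2=\text{NAND}_{\bfm{\delta},\frac{\tau}{2}}(\Rbar,Q_1,\Qbar_2)$, invoke Theorem~\ref{theorem:robust_nand} on each gate with propagation delay $\frac{\tau}{2}$, cascade the two gates over $[t_1,t_1+\frac{\tau}{2}]$ and $[t_1+\frac{\tau}{2},t_1+\tau]$ to establish the set value, and handle reset by symmetry. Where you go beyond the paper is the hold phase on $[t_1+\tau,t_2]$: the paper disposes of it with a one-sentence stability assertion (``$\N_1$ will be held constant at 1 while one of its inputs is 0 and $\N_2$ will continue to output 0 while both its inputs are 1''), whereas you correctly flag the circularity---each gate's input predicate is certified only by the other gate's output---and close it with a forward-invariance (first-exit-time) argument for $G=\{Q_1>p-\gamma,\ \Qbar_2<\gamma\}$. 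That is a genuine tightening of the paper's argument, and your closing observation that the fed-back concentration deviates from its ideal value by only $\gamma+\delta_3=\delta_1-\delta_2<\delta_1$ is precisely the margin that lets the composition close without error accumulation. One small slip in the write-up: you pair each face of $G$ with the wrong derivative. To keep the trajectory in $G$ you need $\D{Q_1}\ge 0$ on the face $Q_1=p-\gamma$ (which holds because $\Qbar_2\le\gamma$ there, so $\N_1$ still sees a $0$ input, as in Lemma~\ref{lemma:case_0}) and $\D{\Qbar_2}\le 0$ on the face $\Qbar_2=\gamma$ (which holds because $Q_1\ge p-\gamma$ and $\Rbar$ encodes $1$ there, as in Lemma~\ref{lemma:case_11}); your text attaches the bound on $\D{Q_1}$ to the $\Qbar_2$-face and vice versa. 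The conclusion is unaffected once the pairing is corrected.
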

\begin{proof}
    Assume the hypothesis, and let $\N=\text{SR}_{\bfm{\delta},\tau}(\Sbar,\Rbar,Q_1,\Qbar_2)$.
    We now let $\N_1=\text{NAND}_{\bfm{\delta},\frac{\tau}{2}}(\Sbar,\Qbar_2,Q_1)$ and $\N_2=\text{NAND}_{\bfm{\delta},\frac{\tau}{2}}(\Rbar,Q_1,\Qbar_2)$ be the I/O CRNs used to construct $\N$ from Construction~\ref{const:sr_latch}.
    By Theorem~\ref{theorem:robust_nand}, we know that
    \begin{align}
        \N_1&\models_{\delta_1}^{\bfm{\delta}} \Phi_\text{NAND}\left(\frac{\tau}{2}\right)\text{ and}\label{eq:n1_nand}\\
        \N_2&\models_{\delta_1}^{\bfm{\delta}} \Phi_\text{NAND}\left(\frac{\tau}{2}\right)\label{eq:n2_nand}
    \end{align}
    hold.
    We complete the proof by showing that these imply that $\N\models_{\delta_1}^{\bfm{\delta}} \Phi_\text{SR}(\tau)$.
    Note that $\Phi_\text{SR}$ can be easily split up into two parts.
    We first show that $\phi_\text{set}(I)\rightarrow\psi_1(I)$ holds, and then show that $\phi_\text{reset}(I)\rightarrow\psi_0(I)$ holds.

    Let $I=[t_1,t_2]\in\bfm{I}(\tau)$ be an interval such that $\phi_\text{set}(I)$ holds.
    Since $\dbracket{\Sbar = 0}$ holds for all $t\in[t_1,t_1+\tau]$,~\eqref{eq:n1_nand} tells us that $\dbracket{Q_1 = 1}$ for all $t\in[t_1+\frac{\tau}{2},t_1+\tau]$.
    Since $\dbracket{\Rbar = 1}$ and $\dbracket{Q_1 = 1}$ for all $t\in[t_1+\frac{\tau}{2},t_1+\tau]$,~\eqref{eq:n2_nand} tells us that $\dbracket{\Qbar_2 = 0}$ starting at time $t_1+\tau$.
    As a result, the output of $\dbracket{Q_1=1}$ and $\dbracket{\Qbar_2=0}$ is stable since the output of $\N_1$ will be held constant at 1 while one of its inputs is 0 and $\N_2$ will continue to output 0 while both its inputs are 1 which will be true until time $t_2$.
    Thus $\phi_1(I)$ holds for all $t\in[t_1+\tau,t_2]$.

    It remains to be shown that for all $I\in\bfm{I}(\tau)$, $\phi_\text{reset}(I)\rightarrow\psi_0(I)$ holds.
    Let $I=[t_1,t_2]\in\bfm{I}(\tau)$ be an interval such that $\phi_\text{reset}(I)$ holds.
    Since $\dbracket{\Rbar = 0}$ holds for all $t\in[t_1,t_1+\tau]$,~\eqref{eq:n2_nand} tells us that $\dbracket{\Qbar_2 = 1}$ for all $t\in[t_1+\frac{\tau}{2},t_1+\tau]$.
    Since $\dbracket{\Sbar = 1}$ and $\dbracket{\Qbar_2 = 1}$ for all $t\in[t_1+\frac{\tau}{2},t_1+\tau]$,~\eqref{eq:n1_nand} tells us that $\dbracket{Q_1 = 0}$ starting at time $t_1+\tau$.
    As a result, the output of $\dbracket{Q_1=0}$ and $\dbracket{\Qbar_2=1}$ is stable since the output of $\N_2$ will be held constant at 1 while one of its inputs is 0 and $\N_1$ will continue to output 0 while both its inputs are 1 which will be true until time $t_2$.
    Thus $\phi_0(I)$ holds for all $t\in[t_1+\tau,t_2]$
\end{proof}

Simulations show that the SR latch works even better than the theorem predicts.
Figure~\ref{fig:srlatch} shows its output with minor random noise and Figure~\ref{fig:sSRnoise} demonstrates how it handles significant random and sinusoidal noise.
\begin{figure}
    \centering
    \subfloat[]{
        %% Creator: Inkscape inkscape 0.92.3, www.inkscape.org
%% PDF/EPS/PS + LaTeX output extension by Johan Engelen, 2010
%% Accompanies image file 'SRFigureNoise2New.pdf' (pdf, eps, ps)
%%
%% To include the image in your LaTeX document, write
%%   \input{<filename>.pdf_tex}
%%  instead of
%%   \includegraphics{<filename>.pdf}
%% To scale the image, write
%%   \def\svgwidth{<desired width>}
%%   \input{<filename>.pdf_tex}
%%  instead of
%%   \includegraphics[width=<desired width>]{<filename>.pdf}
%%
%% Images with a different path to the parent latex file can
%% be accessed with the `import' package (which may need to be
%% installed) using
%%   \usepackage{import}
%% in the preamble, and then including the image with
%%   \import{<path to file>}{<filename>.pdf_tex}
%% Alternatively, one can specify
%%   \graphicspath{{<path to file>/}}
%% 
%% For more information, please see info/svg-inkscape on CTAN:
%%   http://tug.ctan.org/tex-archive/info/svg-inkscape
%%
\begingroup%
  \makeatletter%
  \providecommand\color[2][]{%
    \errmessage{(Inkscape) Color is used for the text in Inkscape, but the package 'color.sty' is not loaded}%
    \renewcommand\color[2][]{}%
  }%
  \providecommand\transparent[1]{%
    \errmessage{(Inkscape) Transparency is used (non-zero) for the text in Inkscape, but the package 'transparent.sty' is not loaded}%
    \renewcommand\transparent[1]{}%
  }%
  \providecommand\rotatebox[2]{#2}%
  \newcommand*\fsize{\dimexpr\f@size pt\relax}%
  \newcommand*\lineheight[1]{\fontsize{\fsize}{#1\fsize}\selectfont}%
  \ifx\svgwidth\undefined%
    \setlength{\unitlength}{145.04944509bp}%
    \ifx\svgscale\undefined%
      \relax%
    \else%
      \setlength{\unitlength}{\unitlength * \real{\svgscale}}%
    \fi%
  \else%
    \setlength{\unitlength}{\svgwidth}%
  \fi%
  \global\let\svgwidth\undefined%
  \global\let\svgscale\undefined%
  \makeatother%
  \begin{picture}(1,0.8229087)%
    \lineheight{1}%
    \setlength\tabcolsep{0pt}%
    \put(0,0){\includegraphics[width=\unitlength,page=1]{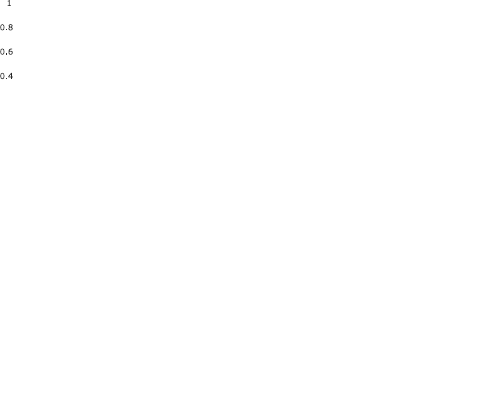}}%
    \put(0.05378004,0.64201265){\color[rgb]{0,0,0}\makebox(0,0)[lt]{\lineheight{0.08333334}\smash{\begin{tabular}[t]{l}$\Sbar$\end{tabular}}}}%
    \put(0,0){\includegraphics[width=\unitlength,page=2]{SRFigureNoise2New.pdf}}%
    \put(0.05378004,0.36485522){\color[rgb]{0,0,0}\makebox(0,0)[lt]{\lineheight{0.08333334}\smash{\begin{tabular}[t]{l}$\Rbar$\end{tabular}}}}%
    \put(0,0){\includegraphics[width=\unitlength,page=3]{SRFigureNoise2New.pdf}}%
    \put(0.05439999,0.1757308){\color[rgb]{0,0,0}\makebox(0,0)[lt]{\lineheight{0.08333334}\smash{\begin{tabular}[t]{l}$Q$\end{tabular}}}}%
    \put(0,0){\includegraphics[width=\unitlength,page=4]{SRFigureNoise2New.pdf}}%
  \end{picture}%
\endgroup%

    }
    \hspace*{2em    }
    \subfloat[]{
        %% Creator: Inkscape inkscape 0.92.3, www.inkscape.org
%% PDF/EPS/PS + LaTeX output extension by Johan Engelen, 2010
%% Accompanies image file 'SRFigureNoise3New.pdf' (pdf, eps, ps)
%%
%% To include the image in your LaTeX document, write
%%   \input{<filename>.pdf_tex}
%%  instead of
%%   \includegraphics{<filename>.pdf}
%% To scale the image, write
%%   \def\svgwidth{<desired width>}
%%   \input{<filename>.pdf_tex}
%%  instead of
%%   \includegraphics[width=<desired width>]{<filename>.pdf}
%%
%% Images with a different path to the parent latex file can
%% be accessed with the `import' package (which may need to be
%% installed) using
%%   \usepackage{import}
%% in the preamble, and then including the image with
%%   \import{<path to file>}{<filename>.pdf_tex}
%% Alternatively, one can specify
%%   \graphicspath{{<path to file>/}}
%% 
%% For more information, please see info/svg-inkscape on CTAN:
%%   http://tug.ctan.org/tex-archive/info/svg-inkscape
%%
\begingroup%
  \makeatletter%
  \providecommand\color[2][]{%
    \errmessage{(Inkscape) Color is used for the text in Inkscape, but the package 'color.sty' is not loaded}%
    \renewcommand\color[2][]{}%
  }%
  \providecommand\transparent[1]{%
    \errmessage{(Inkscape) Transparency is used (non-zero) for the text in Inkscape, but the package 'transparent.sty' is not loaded}%
    \renewcommand\transparent[1]{}%
  }%
  \providecommand\rotatebox[2]{#2}%
  \newcommand*\fsize{\dimexpr\f@size pt\relax}%
  \newcommand*\lineheight[1]{\fontsize{\fsize}{#1\fsize}\selectfont}%
  \ifx\svgwidth\undefined%
    \setlength{\unitlength}{128.16190498bp}%
    \ifx\svgscale\undefined%
      \relax%
    \else%
      \setlength{\unitlength}{\unitlength * \real{\svgscale}}%
    \fi%
  \else%
    \setlength{\unitlength}{\svgwidth}%
  \fi%
  \global\let\svgwidth\undefined%
  \global\let\svgscale\undefined%
  \makeatother%
  \begin{picture}(1,0.90478694)%
    \lineheight{1}%
    \setlength\tabcolsep{0pt}%
    \put(0,0){\includegraphics[width=\unitlength,page=1]{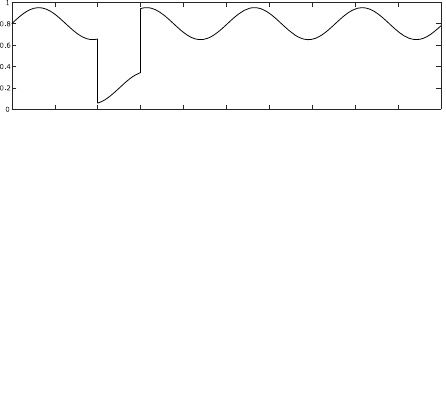}}%
    \put(0.06171055,0.4216465){\color[rgb]{0,0,0}\makebox(0,0)[lt]{\lineheight{0.08333334}\smash{\begin{tabular}[t]{l}$\Rbar$\end{tabular}}}}%
    \put(0,0){\includegraphics[width=\unitlength,page=2]{SRFigureNoise3New.pdf}}%
    \put(0.06057768,0.74407538){\color[rgb]{0,0,0}\makebox(0,0)[lt]{\lineheight{0.08333334}\smash{\begin{tabular}[t]{l}$\Sbar$\end{tabular}}}}%
    \put(0.05772268,0.15878482){\color[rgb]{0,0,0}\makebox(0,0)[lt]{\lineheight{0.08333334}\smash{\begin{tabular}[t]{l}$Q$\end{tabular}}}}%
    \put(0,0){\includegraphics[width=\unitlength,page=3]{SRFigureNoise3New.pdf}}%
  \end{picture}%
\endgroup%
  
    }
    \caption{\label{fig:sSRnoise}
        Simulations of the SR latch design with significant random and sinusoidal noise}
\end{figure}

\subsection{D Latch}\label{sub:d_latch}
Another commonly used memory element is the D latch.
Instead of using the traditional D latch design using four NAND gates, we provide a simpler construction using only four reactions.
The design is modeled closely after our NAND gate and uses the signal restoration algorithm of~\cite{cKlin16} to maintain the signals.
Before we give the construction, we first formally specify the requirement for a D latch.

Given species $D,E,Q\in\bfm{S}$ and $\tau>0$, define the set of input species be $\U=\{D,\Dbar,E,\Ebar\}\subseteq{\bfm{S}}$, let $V=\{Q,\Qbar\}\subseteq{\bfm{S}}$ be the set of output species, and for $a\in\{0,1\}$ let $\phi_a$ and $\psi_a$ be the predicates
\begin{align}
    \phi_a(I)
        &\equiv\dbracket{D = a\land E=1}_{[t_1,t_1+\tau]}\land
           \dbracket{D = a\lor  E=0}_{[t_1+\tau,t_2]}\\
    \psi_a(I)
        &\equiv\dbracket{Q = a}_{[t_1+\tau,t_2]}
\end{align}
for all $I=[t_1,t_2]\in\bfm{I}(\tau)$.
Then let $\Phi_\text{DL}(\tau)=(\alpha,\phi)$ be the requirement where the context assumption $\alpha$ is defined by
\begin{equation}
    \alpha(\bu,\V,h) \equiv \left[\V=\{Q,\Qbar\}\text{ and }h=h_0\right],
\end{equation}
and the I/O requirement $\phi$ is defined by
\begin{equation}
    \phi(\bu,\bv)
        \equiv\big(\forall I\in\bfm{I}(\tau)\big)
            \left[
                \left(\phi_0(I)\implies\psi_0(I)\right)
                \land
                \left(\phi_1(I)\implies\psi_1(I)\right)
            \right].
\end{equation}
Intuitively, the requirement $\Phi_\text{DL}$ says that whenever a set event occurs, \ie, when $D=a$ and $E=1$ for at least time $\tau$, then within $\tau$ time $Q$ converges to $a$ and remains there as long as either $D=a$ or $E=0$.
This is visualized in the timing diagram of Figure~\ref{fig:dlatchcrntiming}.
\begin{figure}[t]
    \begin{center}
    \subfloat[]{
        %% Creator: Inkscape inkscape 0.92.3, www.inkscape.org
%% PDF/EPS/PS + LaTeX output extension by Johan Engelen, 2010
%% Accompanies image file 'DlatchIdealNew.pdf' (pdf, eps, ps)
%%
%% To include the image in your LaTeX document, write
%%   \input{<filename>.pdf_tex}
%%  instead of
%%   \includegraphics{<filename>.pdf}
%% To scale the image, write
%%   \def\svgwidth{<desired width>}
%%   \input{<filename>.pdf_tex}
%%  instead of
%%   \includegraphics[width=<desired width>]{<filename>.pdf}
%%
%% Images with a different path to the parent latex file can
%% be accessed with the `import' package (which may need to be
%% installed) using
%%   \usepackage{import}
%% in the preamble, and then including the image with
%%   \import{<path to file>}{<filename>.pdf_tex}
%% Alternatively, one can specify
%%   \graphicspath{{<path to file>/}}
%% 
%% For more information, please see info/svg-inkscape on CTAN:
%%   http://tug.ctan.org/tex-archive/info/svg-inkscape
%%
\begingroup%
  \makeatletter%
  \providecommand\color[2][]{%
    \errmessage{(Inkscape) Color is used for the text in Inkscape, but the package 'color.sty' is not loaded}%
    \renewcommand\color[2][]{}%
  }%
  \providecommand\transparent[1]{%
    \errmessage{(Inkscape) Transparency is used (non-zero) for the text in Inkscape, but the package 'transparent.sty' is not loaded}%
    \renewcommand\transparent[1]{}%
  }%
  \providecommand\rotatebox[2]{#2}%
  \newcommand*\fsize{\dimexpr\f@size pt\relax}%
  \newcommand*\lineheight[1]{\fontsize{\fsize}{#1\fsize}\selectfont}%
  \ifx\svgwidth\undefined%
    \setlength{\unitlength}{133.55703014bp}%
    \ifx\svgscale\undefined%
      \relax%
    \else%
      \setlength{\unitlength}{\unitlength * \real{\svgscale}}%
    \fi%
  \else%
    \setlength{\unitlength}{\svgwidth}%
  \fi%
  \global\let\svgwidth\undefined%
  \global\let\svgscale\undefined%
  \makeatother%
  \begin{picture}(1,1.31225341)%
    \lineheight{1}%
    \setlength\tabcolsep{0pt}%
    \put(0,0){\includegraphics[width=\unitlength,page=1]{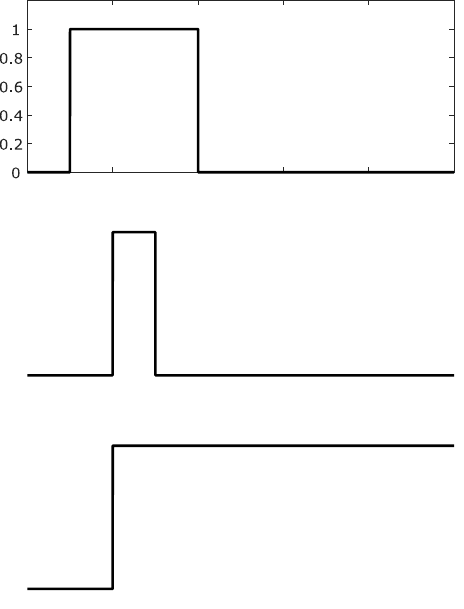}}%
    \put(0.07152345,1.19886873){\color[rgb]{0,0,0}\makebox(0,0)[lt]{\smash{\begin{tabular}[t]{l}$D$\end{tabular}}}}%
    \put(0.07152345,0.76085383){\color[rgb]{0,0,0}\makebox(0,0)[lt]{\smash{\begin{tabular}[t]{l}$E$\end{tabular}}}}%
    \put(0.07152345,0.30037664){\color[rgb]{0,0,0}\makebox(0,0)[lt]{\smash{\begin{tabular}[t]{l}$Q$\end{tabular}}}}%
    \put(0,0){\includegraphics[width=\unitlength,page=2]{DlatchIdealNew.pdf}}%
  \end{picture}%
\endgroup%

    }
    \hspace*{2em}
    \subfloat[]{
        %% Creator: Inkscape inkscape 0.92.3, www.inkscape.org
%% PDF/EPS/PS + LaTeX output extension by Johan Engelen, 2010
%% Accompanies image file 'DlatchNoise1New.pdf' (pdf, eps, ps)
%%
%% To include the image in your LaTeX document, write
%%   \input{<filename>.pdf_tex}
%%  instead of
%%   \includegraphics{<filename>.pdf}
%% To scale the image, write
%%   \def\svgwidth{<desired width>}
%%   \input{<filename>.pdf_tex}
%%  instead of
%%   \includegraphics[width=<desired width>]{<filename>.pdf}
%%
%% Images with a different path to the parent latex file can
%% be accessed with the `import' package (which may need to be
%% installed) using
%%   \usepackage{import}
%% in the preamble, and then including the image with
%%   \import{<path to file>}{<filename>.pdf_tex}
%% Alternatively, one can specify
%%   \graphicspath{{<path to file>/}}
%% 
%% For more information, please see info/svg-inkscape on CTAN:
%%   http://tug.ctan.org/tex-archive/info/svg-inkscape
%%
\begingroup%
  \makeatletter%
  \providecommand\color[2][]{%
    \errmessage{(Inkscape) Color is used for the text in Inkscape, but the package 'color.sty' is not loaded}%
    \renewcommand\color[2][]{}%
  }%
  \providecommand\transparent[1]{%
    \errmessage{(Inkscape) Transparency is used (non-zero) for the text in Inkscape, but the package 'transparent.sty' is not loaded}%
    \renewcommand\transparent[1]{}%
  }%
  \providecommand\rotatebox[2]{#2}%
  \newcommand*\fsize{\dimexpr\f@size pt\relax}%
  \newcommand*\lineheight[1]{\fontsize{\fsize}{#1\fsize}\selectfont}%
  \ifx\svgwidth\undefined%
    \setlength{\unitlength}{133.55702855bp}%
    \ifx\svgscale\undefined%
      \relax%
    \else%
      \setlength{\unitlength}{\unitlength * \real{\svgscale}}%
    \fi%
  \else%
    \setlength{\unitlength}{\svgwidth}%
  \fi%
  \global\let\svgwidth\undefined%
  \global\let\svgscale\undefined%
  \makeatother%
  \begin{picture}(1,1.30285652)%
    \lineheight{1}%
    \setlength\tabcolsep{0pt}%
    \put(0,0){\includegraphics[width=\unitlength,page=1]{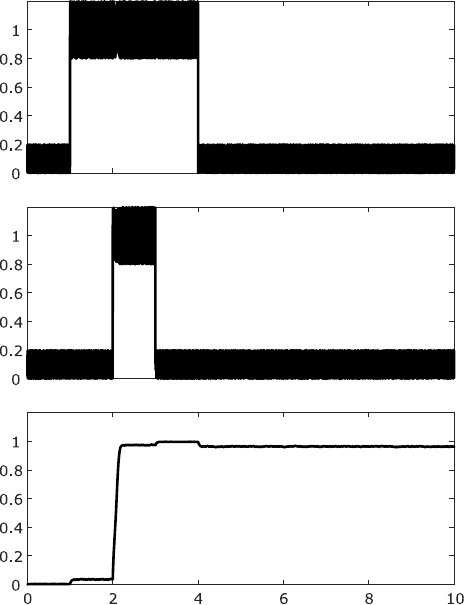}}%
    \put(0.06843668,1.17337546){\color[rgb]{0,0,0}\makebox(0,0)[lt]{\smash{\begin{tabular}[t]{l}$D$\end{tabular}}}}%
    \put(0.06843668,0.73536055){\color[rgb]{0,0,0}\makebox(0,0)[lt]{\smash{\begin{tabular}[t]{l}$E$\end{tabular}}}}%
    \put(0.06843668,0.27488233){\color[rgb]{0,0,0}\makebox(0,0)[lt]{\smash{\begin{tabular}[t]{l}$Q$\end{tabular}}}}%
  \end{picture}%
\endgroup%
  
    }
    \caption{\label{fig:dlatchcrntiming} CRN D latch timing diagram with random noise}
    \end{center}
\end{figure}

We now give the I/O CRN construction that satisfies the above requirement.

\begin{construction}\label{const:d_latch}
    Given three species $D,E,Q$, a vector of strictly positive real numbers $\bfm{\delta}=(\delta_1,\delta_2,\delta_3,\delta_4)$, and $\tau>0$, define the CRN
    \[
        \text{DL}_{\bfm{\delta},\tau}(D,E,Q)
            = (\U,\R,\S),
    \]
    where $\U=\{D,\Dbar,E,\Ebar\}$, $\S=\{Q,\Qbar\}$, and $\R$ consists of the four reactions
    \begin{align}
        D + E + \Qbar
            &\goesto{k}
            D + E + Q\\
        \Dbar + E + Q
            &\goesto{k}
            \Dbar + E + \Qbar\\
        2Q + \Qbar
            &\goesto{3k} 3Q\\
        2\Qbar + Q
            &\goesto{3k} 3\Qbar
    \end{align}
    where $k = 100\delta_4 + \frac{13}{\tau}$.
\end{construction}

Below is the final theorem of this paper showing that the above construction robustly satisfies its requirement.

\begin{theorem}\label{theorem:robust_d_latch}
    If $\bfm{\delta}=(\delta_1,\delta_2,\delta_3,\delta_4)\in(0,\infty)^4$ and $\tau > 0$ are constants satisfying $\delta_2+\delta_3 < \delta_1 < \frac{1}{25}$ and $\delta_2+\delta_3 < \frac{1}{100}$, then $\text{DL}_{\bfm{\delta},\tau}(\Sbar,\Rbar,Q_1,\Qbar_2)\models_{\delta_1}^{\bfm{\delta}} \Phi_\text{DL}(\tau)$.
\end{theorem}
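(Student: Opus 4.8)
The plan is to transcribe the proof of Theorem~\ref{theorem:robust_nand} almost verbatim, since Construction~\ref{const:d_latch} is structurally the NAND construction with its set/reset reactions gated by the enable species $E$ and the identical pair of signal-restoration reactions on $Q,\Qbar$. As in that proof I would fix $\xn$ with $\xn(Q)=1$ and $\xn(\Qbar)=0$, let $\bchat,\xnhat,\Nhat$ be arbitrary perturbations within the stated tolerances, and write $y(t)=\bxhat(t)(Q)$, $\ybar(t)=\bxhat(t)(\Qbar)$. The two driving reactions and the two restoration reactions give
\[
    \D{y} = \khat_1\,\buhat(D)\buhat(E)\,\ybar - \khat_2\,\buhat(\Dbar)\buhat(E)\,y + 3\khat_3 y^2\ybar - 3\khat_4 y\ybar^2, \qquad \D{\ybar}=-\D{y},
\]
(suppressing $t$ in the input components), so $p=y+\ybar$ is constant with $|p-1|<\delta_3$ exactly as before. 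Since interchanging $Q\leftrightarrow\Qbar$ and $D\leftrightarrow\Dbar$ maps the reaction set to itself and carries the implication $\phi_0(I)\to\psi_0(I)$ to $\phi_1(I)\to\psi_1(I)$, it suffices to prove just one; I would prove $\phi_1(I)\to\psi_1(I)$.

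Fix $I=[t_1,t_2]\in\bfm{I}(\tau)$ with $\phi_1(I)$. On the \emph{set phase} $[t_1,t_1+\tau]$ we have $D=1$ and $E=1$ up to the $\delta_1$ input perturbation, so $\buhat(\Dbar)$ is small and the reset term is a mere leak, while $\khat_1\buhat(D)\buhat(E)\ybar$ plays the role of the NAND drive. Bounding the restoration terms below by $-\tfrac{p^3}{18}\big((3+d)^{3/2}+9d\big)$ with the same minimization used in Lemma~\ref{lemma:case_11} (with $d=\delta_4/k$), together with $\buhat(D)\buhat(E)\ge(1-\delta_1)^2$ and $\buhat(\Dbar)\buhat(E)\le 2\delta_1$, reproduces on $[t_1,t_1+\tfrac{\tau}{2}]$ exactly the inequality of Lemma~\ref{lemma:x_goesto_3/5}, whence $y(t_1+\tfrac{\tau}{2})>\tfrac{3}{5}$. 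Discarding the nonnegative drive and keeping only the restoration-plus-leak part then yields the inequality of Lemma~\ref{lemma:x_signal_restoration}, giving $y(t_1+\tau)>p-\gamma$ with $\gamma=\delta_1-\delta_2-\delta_3$.

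The genuinely new step is the \emph{hold phase} $[t_1+\tau,t_2]$, where $\phi_1$ only guarantees the pointwise disjunction $D=1\lor E=0$. The key observation is that in either disjunct the reset term is forced small: if $D=1$ then $\buhat(\Dbar)\le\delta_1$, and if $E=0$ then $\buhat(E)\le\delta_1$, so in both cases $\buhat(\Dbar)\buhat(E)\le\delta_1(1+\delta_1)\le2\delta_1$. Hence throughout $[t_1+\tfrac{\tau}{2},t_2]$ the single uniform bound
\[
    \D{y}\ge 3k(1-d)y^2\ybar-3k(1+d)y\ybar^2-2k\delta_1(1+d)\,y
\]
holds, so Lemma~\ref{lemma:x_signal_restoration} (whose leak constant $c=2k\delta_1(1+d)$ dominates ours) applies across the whole phase and gives $y(t)>p-\gamma$ for all $t\in[t_1+\tau,t_2]$. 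When $E=0$ this says precisely that the restoration reactions \emph{alone} hold the stored bit, which is the source of the latch's memory. Finally $y>p-\gamma$, $p>1-\delta_3$, and the $\delta_2$ measurement error combine exactly as in Lemma~\ref{lemma:case_11} to produce an output $\delta_1$-close to $Q=1,\Qbar=0$ on $[t_1+\tau,t_2]$, i.e.\ $\psi_1(I)$.

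I expect the main obstacle to be making the hold-phase leak bound rigorous: one must verify that the pointwise disjunction $D=1\lor E=0$, rather than a single global input mode, still yields one leak constant valid for \emph{all} $t\in[t_1+\tfrac{\tau}{2},t_2]$, so that the constant-coefficient Lemma~\ref{lemma:x_signal_restoration} can be invoked unchanged and continuity of $\bxhat$ lets the two lemmas be stitched at $t_1+\tfrac{\tau}{2}$. Everything else is a direct transcription of the NAND argument under $Y\mapsto Q$.
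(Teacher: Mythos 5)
Your proposal is correct and follows essentially the same route as the paper: fix the perturbations, use conservation of $q+\qbar=p$, derive the two differential inequalities matching Lemmas~\ref{lemma:x_goesto_3/5} and~\ref{lemma:x_signal_restoration} (the first on the set phase where $D=1\land E=1$, the second on all of $I$ using the fact that the disjunction $D=1\lor E=0$ keeps the reset term bounded by the same leak constant), and dispatch the $\phi_0$ case by the $Q\leftrightarrow\Qbar$, $D\leftrightarrow\Dbar$ symmetry. Your explicit justification of the hold-phase leak bound is in fact more detailed than the paper's, which leaves that step as ``easy to show.''
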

\begin{proof}
    Assume the hypothesis and let $\N=(\U,\R,\S)=\text{DL}_{\bfm{\delta},\tau}(D,E,Q)$.
    We fix initial state $\xn\in\td^\S$ defined by $\xn(Q)=1$ and $\xn(\Qbar)=0$.
    Let $\bc=(\bu,\V,h)$ be a context that satisfies the context assumption $\alpha(\bc)$.
    Let $\bchat=(\buhat,\V,\hhat)$ be $(\delta_1,\delta_2)$-close to $\bc$, let $\xnhat$ be $\delta_3$-close to $\xn$, and let $\Nhat$ be $\delta_4$-close to $\N$.
    We fix $\bxhat\in C[\S]$ as the unique solution generated by $\Nhat$ in context $\bchat$ on the initial state $\xnhat$, and for convenience, we write $q(t)$ and $\qbar(t)$ to denote $\bxhat(t)(Q)$ and $\bxhat(t)(\Qbar)$, respectively.
    Now let $p=\xnhat(Q)+\xnhat(\Qbar)$.
    Since $\D{q}=-\D{\qbar}$, we know that $q(t)+\qbar(t) = p$ for all $t\in\td$.

    Let $I=[t_1,t_2]$ be an interval that satisfies $\phi_1(I)$.
    It is easy to show by bounding arguments similar to Theorem~\ref{theorem:robust_nand} that the inequality
    \[
        \D{q}\ge
            k\left[-\frac{p^3}{18} \left((3+d)^{3/2} + 9d\right) + (1-d)(1-\delta_1)^2(p-q) - 2(1+d)\delta_1\qbar\right]
    \]
    holds for all $t\in[t_1,t_1+\tau]$ where $d=\frac{\delta_4}{k}$.
    Similarly, we can easily show that
    \[
        \D{q}\ge
            k\left[3(1-d)q^2(p-q) - 3(1+d)q (p-q)^2 - 2\delta_1(1+d)q\right]
    \]
    holds for all $t\in I$.
    Thus by Lemmas~\ref{lemma:x_goesto_3/5} and~\ref{lemma:x_signal_restoration}, we see that $q(t)>p-\gamma$ and $\qbar(t)<\gamma$ for all $t\in[t_1+\tau,t_2]$ where $\gamma=\delta_1-\delta_2-\delta_3$.
    Thus $\bxhat$ is $\delta_1$-close to satisfying $\psi_1(I)$.

    By symmetry, if $I$ is an interval that satisfies $\phi_0(I)$, then $\bxhat$ is $\delta_1$-close to satisfying $\psi_1(I)$.
    Therefore $\N\models_{\delta_1}^{\bfm{\delta}}\Phi_\text{DL}(\tau)$.
\end{proof}

A simulation of the D latch operating on an input is visualized in Figure~\ref{fig:dlatchcrntiming}.
Again, random noise is added to demonstrate the robustness of the construction.  

\section{Discussion}\label{sec:discussion}
We have shown that any combinatorial circuit can be implemented by a robust input/output chemical reaction network.
By ``robust'' we mean that it tolerates bounded perturbations in the input signals, initial concentrations, reaction rate constants, and output measurements.
A key feature of our construction is that it preserves robustness under composition.
Thus, adding gates to a combinatorial circuit does not affect its robustness, however, it does increase the propagation delay if the new gates increase the depth of the circuit.
Preservation of robustness in this way allows designers to construct more complex circuits without needing to prove additional robustness theorems.

We have also shown that two sequential memory circuits can be implemented with robust I/O CRNs.
First, we showed that an SR latch can be constructed by composing two NAND gates together.
The proof of correctness relies solely on the proven requirements of the NAND gate subcomponents without any additional bounding arguments.
We also constructed a robust D latch which uses half the number of species and one-third the number of reactions of the SR latch construction.
This was a surprising reduction in complexity since traditional D latch designs use two SR latches (four NAND gates).

One drawback to our circuit design is that it does not inherently support hysteresis, and therefore circuits instantaneously react to changes in their input.
As a result, our construction fails on many common sequential circuits.
For example, a ring oscillator circuit constructed by connecting the output of a NAND gate to its own inputs ought to rapidly oscillate between 0 and 1.
However, it is easy to show that our implementation of such a circuit converges to an equilibrium state rather than rapidly oscillate.

Although some sequential circuits obviously fail, others can be constructed without issue.
For example, a negative edge-triggered D flip flop can be constructed using two D latches connected in a master-slave configuration.
In Figure~\ref{fig:dflipcrntiming}, we show a MATLAB Simbiology simulation of an I/O CRN design of this circuit composed of two D latches from Construction~\ref{const:d_latch}.
\begin{figure}
    \begin{center}
        \includegraphics[width=3.5in]{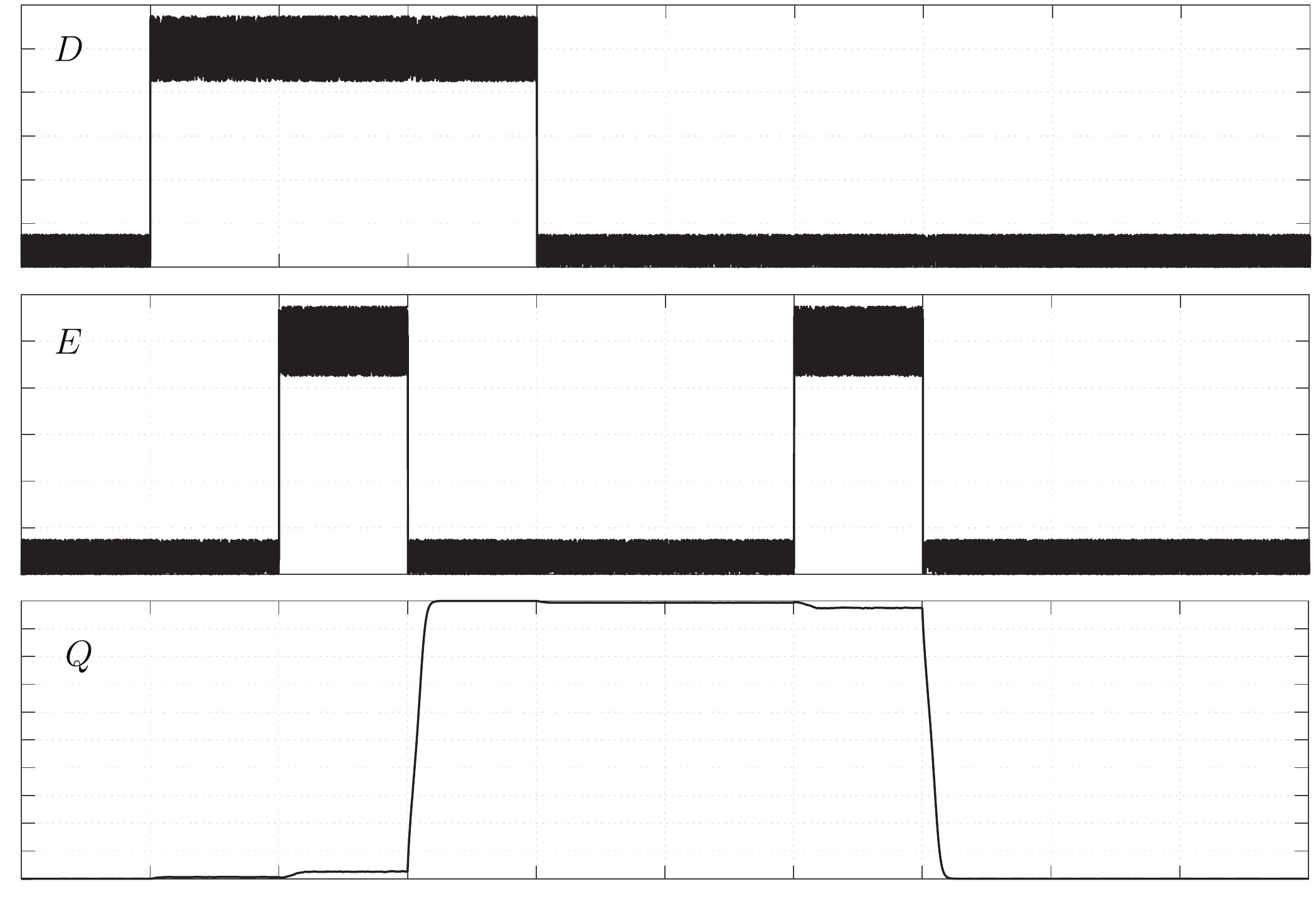}
        \caption{\label{fig:dflipcrntiming} CRN negative edge-triggered D flip-flop timing diagram with random noise}   
    \end{center}
\end{figure}
The simulations suggest that it works appropriately, and we suspect that techniques similar to those in Section~\ref{sec:robust_memory} can be used to show it is robust.
However, such proofs depend on properly stating the requirements of an edge-triggered flip flop, which is a natural next step to our research.

\section*{Acknowledgments}
We thank Jack Lutz and the Laboratory of Molecular Programming at Iowa State University for useful discussions.

\end{document}